\newcommand{\textsfup}[1]{\textsf{\textup{#1}}}
\newcommand{\ket}[1]{| #1 \rangle}
\newcommand{\comment}[1]{}
\newcommand{\hide}[1]{}
\newcommand{\eps}{\varepsilon}
\renewcommand{\phi}{\varphi}
\newcommand{\norm}[1]{\parallel #1 \parallel}
\newcommand{\MITM}{\text{Meet-in-the-middle}}
\newcommand{\GCF}{\Gamma_{CF}}
\newcommand{\GKE}{\Gamma_{KE_2}}
\newcommand{\SSS}{\mbox{\textsfup{S}}}
\newcommand{\UU}{\mbox{\textsfup{U}}}
\newcommand{\CC}{\mbox{\textsfup{C}}}
\newcommand{\search}{\textsfup{SEARCH}}
\newcommand{\ED}{\text{Element Distinctness}}
\newcommand{\advpm}{\textsf{ADV}^{\pm}}
\newtheorem{theorem}{Theorem}
\newtheorem{corollary}{Corollary}
\newtheorem{definition}{Definition}
\newtheorem{lemma}{Lemma}
\title{Quantum attacks against iterated block ciphers}
\author{M. Kaplan\thanks{LTCI, T\'el\'ecom ParisTech, 23 avenue dItalie, 75214 Paris CEDEX 13, France}}
\date{}
\begin{document}

\maketitle

\begin{abstract}
We study the amplification of security against quantum attacks provided by iteration of block ciphers.
In the classical case, the $\MITM$ attack is a generic attack against those constructions.
This attack reduces the time required to break double iterations
to only twice the time it takes to attack a single block cipher, given that the attacker has access to a large
amount of memory. More abstractly, it shows that security by composition does
not achieve exact multiplicative amplification.

We prove here that for quantum adversaries, two iterated ideal block cipher are more difficult to attack 
than a single one. We give a quantized version of the $\MITM$ attack and 
then use the generalized adversary method to prove that it is optimal.
An interesting corollary is
that the time-space product for quantum attacks is very different from  what classical attacks allow.
This first result seems to indicate that composition resists better to quantum attacks
than to classical ones.

We investigate security amplification by composition further by examining the case 
of four iterations. 
We quantize a recent technique called the dissection attack using the framework of quantum walks.
Surprisingly, this leads to better gains over classical attacks than for double iterations,
which seems to indicate that when the number of iterations grows, 
the resistance against quantum attacks decreases.

\end{abstract}

\section{Introduction}
Quantum information processing has deeply changed the landscape of classical cryptography. In particular, cryptosystems based on integer factoring and discrete logarithm are known to be completely insecure against quantum computers. This opened the field of {\em post-quantum (PQ) cryptography}, which tries to restore the security of classical cryptosystems against quantum attacks. 
Regarding public key cryptography, PQ cryptography is investigating (or reinvestigating) a number of different approaches such as lattice-based cryptography, code-based cryptography or elliptic curve cryptography. The goal of these different approaches is to base
public key encryption mechanisms on 
problems that are believed to be hard to solve even for quantum computers.

The situation in symmetric cryptography is more complicated.
It is well known quantum computers can speed-up certain information processing tasks 
that are useful for the cryptanalysis of symmetric cryptosystems.
This includes
exhaustive search~\cite{gro96} or collision finding~\cite{amb07}.
In fact, quantum cryptanalysis is often cited as a 
motivation to study quantum algorithms~\cite{BHT98,AS04,Zhan13}.
Since these quantum algorithms usually allow only polynomial speedups, this tends to spread 
the belief that the security against quantum adversaries can be restored by increasing the size of private keys.

This short answer to the question of security against quantum attacks leaves aside a number of issues.
For example, it applies to generic attacks and not to cryptographic attacks. Roughly speaking,
generic attacks work against constructions based on ideal functionalities, whereas cryptographic 
attacks try to attack their implementations. 
Attacking realistic, complex cryptosystems may require more effort than just applying basic quantum
algorithms. This situation is well described by Daniel Bernstein~\cite{ber10}:
\begin{quote}
``Grover’s algorithm takes only square-root time compared to a brute-force key
search, but this does not mean that it takes less time than the more sophisticated
algorithms used to find hash collisions, McEliece error vectors, etc. Sometimes
Grover's idea can be used to speed up the more sophisticated algorithms, but
understanding the extent of the speedup requires careful analysis."
\end{quote}
In this work, we argue that the tools developed to study quantum speedups in complexity
theory can be applied to get new insights in cryptographic settings.
In particular, the field of quantum walk algorithms is a flexible framework to develop search algorithms on complex structures~\cite{santha}.
Even more interestingly, the generalized adversary method~\cite{hls07} can be used to prove lower bounds, and thus resistance against quantum attacks, at least in a black box setting. 
Applying these tools to
study the resistance against quantum attacks
is an extension of the current horizon of PQ cryptography that can be though as
{\em quantum post-quantum cryptography}.

We focus here on one of the most fundamental situations in symmetric cryptography: block cipher encryption. 
In this setting, a message $m$ is cut into blocks of fixed size, and each block is encrypted using a permutation specified
by a secret key. The original message can then be recovered by applying the inverse permutation on each
block. The specification of the encryption and decryption algorithms are public, only the key remains secret through the process.
Block cipher encryption is widely used in practice. It is also an important building block in the design
of other cryptographic primitives such as message authentication codes, hash functions, or digital signature.

A block cipher is designed such that applying a permutation specified by an randomly sampled
key ``looks like" sampling a random permutation.
In this paper, we work at a more abstract level and consider that a block cipher is a collection of
random permutations $F_i:[M] \rightarrow [M]$ where $i=1,\dots, N$ are the potential secret keys.
The set of permutations is public, and anyone can efficiently compute $F_i(X)$ and $F_i^{-1}(X)$ for any $X,i$.
We consider an attacker that knows a few pairs of plaintext with corresponding ciphertexts, all encrypted with the same key. The goal for the attacker is to extract the secret key that was used for encryption from the datas.

If the permutations are random, the only attack is to search exhaustively among the keys,
which requires $O(N)$ classical queries to the functions $F_i$ or the inverse functions $F_i^{-1}$. Of course, a quantum attacker
can use Grover search algorithm to achieve a quadratic speedup, and find the key with $O(\sqrt N)$ quantum queries.
In order to restore the same security parameter
against a quantum cryptanalyst, it suffices to double the length of the key. 
Of course, it is unreasonable to assume that a time bound $N$ is
considered secure against quantum attacks because it is secure against classical attacks.
However, the broader question we investigate here is
to amplify the security against quantum attacks.
We believe this question remains legitimate.

Although increasing the key length is a neat theoretical answer, it is not always clear how to implement it in practice.
Block ciphers are deterministic algorithms, designed to work with specified parameters $N$ and~$M$. Assuming that one can double the 
key size is a strong structural assumption on the design of the cipher. For example, AES can be used to work with different key sizes, but it was not the case
for its ancestor DES. This block cipher was designed to work with 56 bits keys, and although this was considered sufficient by the time is was standardized, the question of how to increase this size became central when brute-force attacks started looking
realistic.

The simplest attempt to increase the key size is to compose permutations with independent keys.
The composition of $r$ independent permutations of a block cipher is called an $r$-encryption.
For $r=2$, the size is doubled, but there is a clever attack against 
this construction~\cite{DH77,MH81}. 
Suppose that an attacker knows a pair of plaintext-ciphertext $(P,C)$. These satisfy $C=F_{k_2}(F_{k_1}(P))$,
where $(k_1, k_2)$ are the keys used for encryption.
Since inverse permutations can be computed, an attacker can construct tables $F_k(P)$ and 
$F^{-1}_{k'}(C)$ for every possible keys $k, k'$. 
Finding a collision $F_{k_1}(P) = F^{-1}_{k_2}(C)$ reveals the keys used for encryption.

This attack, known as the $\MITM$ attack, 
shows that it only takes twice more time to attack double iterations than it takes to
attack a single one, where 
a naive cryptographer would expect
a quadratic increase. Equivalently, the key expansion obtained by double-iteration is only 1 bit, although the size
of the key space has doubled.
Of course, this attack is optimal up to a factor two.
The $\MITM$ attack shows that even a simple idea such as
security amplification by composition should
be carefully studied.
It also has practical consequences, and led to the standardization
of triple-DES rather than the insecure double-DES.

We address the question of how resistant composition is against quantum adversaries.
An obvious quantum attack against the 2-encryption is to use a collision finding algorithm.
Finding collisions has been studied extensively in classical and quantum settings. The series of quantum algorithms
for this problem culminated with
Ambainis' celebrated algorithm~\cite{amb07} for the Element Distinctness problem,
which is optimal with respect to the number of queries to the input~\cite{AS04}. In our case, it extracts the keys with $N^{2/3}$ quantum queries
to the permutations.
However, the key extraction problem for double encryption has
more structure than element distinctness and there is no clear indication that this approach is optimal.
The problem has a lot more possible inputs and more queries are allowed.
Starting from an instance of element distinctness and embedding it to an input that is consistant with the structure
of key extraction may not always be possible without a large number of queries to the input.
For this reason, there is no obvious way of proving
the optimality of Ambainis' algorithm for key extraction by reduction from Element Distinctness.

In Section~\ref{sec:qmitm},
we prove using the generalized adversary method that $N^{2/3}$ queries are also required to extract the keys in the case of 2-encryption
({\bf Theorem~\ref{thm:KE2LB}}). 
Starting from an adversary matrix for $\ED$, we build an adversary matrix for the new problem ({\bf Lemma~\ref{lm:WCLB}}). The underlying idea is that even if there is no obvious reduction, the two problems are equivalent because they have the same symmetries.
We prove that the attack against 2-encryption that consists in searching for collision is the most time-efficient, 
leading to a generic attack deserving to be called {\em the quantum $\MITM$ attack}.
The immediate consequence is that for quantum computers, contrary to the classical ones, 2-encryption is harder to break than a single ideal cipher.

A surprising corollary is that classical and quantum time-space products are very different ({\bf Corollary~\ref{cor:TS}}). 
In the classical case, the $\MITM$ attack is time-efficient for an attacker that is willing to pay with more space, but the global time-space product is similar to
the one achieved by an exhaustive search. Using quantum algorithms, the time-space product of the 
optimal algorithm of Ambainis
is worse than an exhaustive search. While this may not be a surprise from
the point of view of quantum complexity theory (see e.g. the conclusion of~\cite{ED05}), this suggests that the time-space product,
a common way of evaluating classical attacks~\cite{ddks12}, may not be the correct figure of merit to evaluate quantum attacks.

The results obtained for two iterations suggest that composition could be a good tool to 
amplify the resistance against quantum attacks because it prevents the quadratic speedups
allowed by the quantization of an exhaustive search.
We investigate this question further in Section~\ref{sec:4enc}, by looking at the case of 4-encryption.
Once again, the tools from quantum complexity theory appear to be very helpful to tackle this question.
We give a quantization of the dissection attack for four encryption recently introduced by Dinur, Dunkelman, Keller and Shamir~\cite{ddks12}. This attack is essentially a composition of an exhaustive search with the basic $\MITM$ algorithm.
Surprisingly, it is better than previously known classical attacks solely based on the $\MITM$ technique.
Quantum query complexity theory developed good tools to quantize such compositions, but in order to 
quantify also
time and space complexity, we use the framework of quantum walks. Our main finding in this case is that
the resistance against quantum attacks decreases when the number of iterations goes from two to four ({\bf Theorem~\ref{thm:KE4UB}}).

While these tools appear to be very helpful to study two encryptions and four encryptions,  
we are not in position to make a statement for general multiple encryptions.
Using the generalized adversary method and quantum walks has been very fruitful for studying Merkle puzzles in a quantum world~\cite{BHKKL11}, in which these techniques were used to devise attacks and prove their optimality.
We present here another important cryptographic scenario in which these tools can be applied to derive new results.
A similarity between the two scenarios is that polynomial speedups are very insightful.
Many specifically quantum techniques are available to study such speedups and their optimality in black-box settings.
Even if the main question about successive encryption remains open, we hope that our work demonstrates that quantum techniques can be very interesting for PQ cryptography, and that it
will motivate further interactions between quantum computer scientists and classical cryptographers.

\section{Optimality of the quantum $\MITM$}
\label{sec:qmitm}

In what follows, $M$ and $N$ are two integers of comparable size and $[N]$ is the set of integers from 1 to $N$.
We denote $\mathcal S_{[N]}$ the set of permutations of $[N]$. 
The space of keys is $[N]$ and the space of blocks $[M]$.

We consider problems with inputs given as oracles (or black-boxes). Usually, we consider these inputs as functions $f$, and a classical
query to the input returns $f(x)$ for some $x$ in the domain of $f$. 
In some cases, we may also consider an input $f$ as a string where $f_i$ denotes the result of the classical query $i$ to $f$. 
This notation is convenient in particular when considering adversary matrices whose entries are indexed by inputs to the problem.
In the quantum setting, the only difference is that attacker can make quantum queries to the input.
A brief exposition of the underlying model with
the main theorems that we use to derive our results can be found in Appendix.

The Element Distinctness problem has been extensively studied in 
quantum query complexity.
In particular, Ambainis' quantum walk based algorithm~\cite{amb07} is known to be optimal for this problem~\cite{AS04}.
The recent result of Rosmanis gives the best possible bound with respect to the range of the input function~\cite{ros14}.

\begin{definition}
The Element Distinctness ($ED$) problem takes input $O:[N] \rightarrow [M]$ with the promise that there exists 
a pair $i,j \in [N]$ such that $O(i) = O(j)$. The problem is to output the pair $(i,j)$.
\end{definition}

In this paper, we use the slightly more structured problem known as {\em Claw finding}~\cite{BHT98}.
\begin{definition}
Given two one-to-one functions $F:[N/2] \rightarrow [M]$ and $G:[N/2] \rightarrow [M]$, a claw is a pair $x,y \in [N]$ such that $F(x)=G(y)$.
The Claw finding ($CF$) problem is, on input $F,G$ to return the pair $x,y$ given that there is exactly one.
\end{definition}
It is easy to prove that $CF$ and $ED$ are equivalent up to constant factors. Given an input $O$ for $ED$, an input for $CF$ can be obtained by randomly cutting $O$ into two functions. The probability that the two colliding elements are splited is $1/2$ and running the algorithm for $CF$ a few times on different random cuts is sufficient to find the collision with high probability. Therefore, upper and lower bounds for $ED$ apply similarly to $CF$, up to constant factors.

\begin{theorem}
\label{thm:boundCF}
For $M\geq N$, the quantum query complexity and time complexity of $ED$ and $CF$ are $\Theta(N^{2/3})$.
The most time-efficient algorithm for these problems uses memory $O(N^{2/3})$.
\end{theorem}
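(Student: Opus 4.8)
The plan is to reduce everything to a single problem. Since the excerpt already establishes that $CF$ and $ED$ are equivalent up to constant factors, it suffices to prove the three claims — query complexity, time complexity, and memory of the most time-efficient algorithm — for $ED$ and then transfer them to $CF$. I would split the argument into a matching query upper bound and query lower bound, and separately argue that the upper bound can be realized time-efficiently.

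For the upper bound I would invoke Ambainis' quantum walk~\cite{amb07} on the Johnson graph $J(N,r)$, whose vertices are the $r$-element subsets $S \subseteq [N]$, decorating each vertex with the stored values $\{(i, O(i)) : i \in S\}$. Call a vertex marked if $S$ contains a colliding pair $i,j$ with $O(i)=O(j)$. Plugging the standard parameters of $J(N,r)$ into the quantum walk search framework — spectral gap $\delta = \Theta(1/r)$, fraction of marked vertices $\varepsilon = \Theta(r^2/N^2)$ under the unique-collision promise, setup cost $S = O(r)$ queries to load a subset, update cost $U = O(1)$ queries, and checking cost $C = 0$ since the collision test reads only the stored data — gives a query complexity of $O\bigl(r + \tfrac{1}{\sqrt\varepsilon}\tfrac{U}{\sqrt\delta}\bigr) = O\bigl(r + \tfrac{N}{\sqrt r}\bigr)$. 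Optimizing at $r = N^{2/3}$ yields $O(N^{2/3})$ queries, and the stored data occupies $O(r) = O(N^{2/3})$ memory, which already accounts for the memory claim.

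The delicate point, and the step I expect to be the main obstacle, is turning this query bound into a time bound with no more than polylogarithmic overhead while keeping the walk coherent and reversible. The collision check and the update step must be supported by a history-independent data structure — a hash table or skip list keyed on the values $O(i)$, as in Ambainis' analysis — so that insertions, deletions, and the test "does a collision exist?" each cost $\mathrm{polylog}(N)$ time and, crucially, leave memory in a canonical state independent of the order of operations. Without history independence the diffusion steps of the walk would not interfere correctly. Granting such a data structure, the time complexity matches the query complexity up to polylogarithmic factors, establishing the $O(N^{2/3})$ time upper bound at $O(N^{2/3})$ memory.

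For the lower bound I would appeal to the polynomial method: the result of Aaronson and Shi~\cite{AS04}, with the sharp range dependence of Rosmanis~\cite{ros14} — which is precisely why the hypothesis $M \ge N$ appears — shows that $\Omega(N^{2/3})$ quantum queries are required for $ED$, hence for $CF$. Since time complexity is bounded below by query complexity, the same $\Omega(N^{2/3})$ bound holds for time, and combined with the upper bound this gives $\Theta(N^{2/3})$ for both measures, completing the proof.
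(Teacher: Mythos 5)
Your proposal is correct and follows essentially the same route as the paper, which states Theorem~\ref{thm:boundCF} without proof as a summary of cited results: Ambainis' Johnson-graph walk with a history-independent data structure for the $O(N^{2/3})$ time/memory upper bound~\cite{amb07}, the $\Omega(N^{2/3})$ lower bound of~\cite{AS04} refined for small range by~\cite{ros14}, and the random-cut equivalence between $ED$ and $CF$ given just before the theorem. Your write-up simply fills in the details of those citations (one cosmetic slip: Rosmanis' small-range bound is obtained via the adversary method, not the polynomial method, but this does not affect the argument).
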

The  goal of this section is to study the problem of extracting keys from the double iteration of a block cipher.
In this context, we assume that the quantum cryptanalyst has implemented the publicly known block cipher on
a quantum computer. He then receives the classical data consisting in couples of plaintext ($P$) and ciphertext ($C$), all encrypted with the same key. 
Finally, he uses this datas to extract the secret key with the help of the quantum computer.
We assume that there is only one key that maps $P$ to $C$. Equivalently, we can assume that
the attacker knows a few pairs $(P_i,C_i)$, all encrypted with the same keys.
This ensures that the key mapping $P_i$ to $C_i$ for all $i$ is unique with very high probability.
This can always be simulated by giving only one pair to the attacker and increasing the size of the blocks.
This introduces constant factors in the complexity analysis, and enforces the permutations to have a product structure, but
do not induce any fundamental change to the security proof given here.
Notice that applying this trick implies that $M$ and $N$ are then not comparable anymore, which is an important fact in Section~\ref{sec:4enc}.

\begin{definition}
The 2-Key Extraction ($KE_2^{P,C}$) problem with $P,C \in [M]$ takes input $\mathcal{F}$ where $\mathcal{F}= \{F_1, \ldots, F_N\}$ is a collection of
permutations $F_i \in \mathcal S_{[M]}$  with the promise that there exists a unique couple $(k_1,k_2)$ such that $F_{k_2}(F_{k_1}(P))=C$. The goal of the problem is to output the pair $(k_1,k_2)$.
\end{definition}

\begin{figure}[h]
\begin{center}
\psset{xunit=.5pt,yunit=.5pt,runit=.5pt}
\begin{pspicture}(000,780)(500,1000)
{
\newrgbcolor{curcolor}{0 0 0}
\pscustom[linewidth=1,linecolor=curcolor]
{
\newpath
\moveto(140,980)
\lineto(220,980)
\lineto(220,880)
\lineto(140,880)
\closepath
}
}
{
\newrgbcolor{curcolor}{0 0 0}
\pscustom[linewidth=1,linecolor=curcolor]
{
\newpath
\moveto(280,980)
\lineto(360,980)
\lineto(360,880)
\lineto(280,880)
\closepath
}
}
{
\newrgbcolor{curcolor}{0 0 0}
\pscustom[linewidth=1,linecolor=curcolor]
{
\newpath
\moveto(180,820.00000262)
\lineto(180,880.00000262)
}
}
{
\newrgbcolor{curcolor}{0 0 0}
\pscustom[linewidth=1,linecolor=curcolor]
{
\newpath
\moveto(170,860.00000262)
\lineto(180,880.00000262)
}
}
{
\newrgbcolor{curcolor}{0 0 0}
\pscustom[linewidth=1,linecolor=curcolor]
{
\newpath
\moveto(190,860.00000262)
\lineto(180,880.00000262)
}
}
{
\newrgbcolor{curcolor}{0 0 0}
\pscustom[linewidth=1,linecolor=curcolor]
{
\newpath
\moveto(320,820.00000262)
\lineto(320,880.00000262)
}
}
{
\newrgbcolor{curcolor}{0 0 0}
\pscustom[linewidth=1,linecolor=curcolor]
{
\newpath
\moveto(310,860.00000262)
\lineto(320,880.00000262)
}
}
{
\newrgbcolor{curcolor}{0 0 0}
\pscustom[linewidth=1,linecolor=curcolor]
{
\newpath
\moveto(330,860.00000262)
\lineto(320,880.00000262)
}
}
{
\newrgbcolor{curcolor}{0 0 0}
\pscustom[linewidth=1,linecolor=curcolor]
{
\newpath
\moveto(80,929.99999962)
\lineto(140,929.99999962)
}
}
{
\newrgbcolor{curcolor}{0 0 0}
\pscustom[linewidth=1,linecolor=curcolor]
{
\newpath
\moveto(120,939.99999962)
\lineto(140,929.99999962)
}
}
{
\newrgbcolor{curcolor}{0 0 0}
\pscustom[linewidth=1,linecolor=curcolor]
{
\newpath
\moveto(120,919.99999962)
\lineto(140,929.99999962)
}
}
{
\newrgbcolor{curcolor}{0 0 0}
\pscustom[linewidth=1,linecolor=curcolor]
{
\newpath
\moveto(360,929.99999962)
\lineto(420,929.99999962)
}
}
{
\newrgbcolor{curcolor}{0 0 0}
\pscustom[linewidth=1,linecolor=curcolor]
{
\newpath
\moveto(400,939.99999962)
\lineto(420,929.99999962)
}
}
{
\newrgbcolor{curcolor}{0 0 0}
\pscustom[linewidth=1,linecolor=curcolor]
{
\newpath
\moveto(400,919.99999962)
\lineto(420,929.99999962)
}
}
{
\newrgbcolor{curcolor}{0 0 0}
\pscustom[linewidth=1,linecolor=curcolor]
{
\newpath
\moveto(80,929.99999962)
\lineto(140,929.99999962)
}
}
{
\newrgbcolor{curcolor}{0 0 0}
\pscustom[linewidth=1,linecolor=curcolor]
{
\newpath
\moveto(120,939.99999962)
\lineto(140,929.99999962)
}
}
{
\newrgbcolor{curcolor}{0 0 0}
\pscustom[linewidth=1,linecolor=curcolor]
{
\newpath
\moveto(120,919.99999962)
\lineto(140,929.99999962)
}
}
{
\newrgbcolor{curcolor}{0 0 0}
\pscustom[linewidth=1,linecolor=curcolor]
{
\newpath
\moveto(220,929.99999962)
\lineto(280,929.99999962)
}
}
{
\newrgbcolor{curcolor}{0 0 0}
\pscustom[linewidth=1,linecolor=curcolor]
{
\newpath
\moveto(260,939.99999962)
\lineto(280,929.99999962)
}
}
{
\newrgbcolor{curcolor}{0 0 0}
\pscustom[linewidth=1,linecolor=curcolor]
{
\newpath
\moveto(260,919.99999962)
\lineto(280,929.99999962)
}

\rput(63.54199219,930){$P$}
\rput(440,930){$C$}
\rput(180,805){$k_1$}
\rput(320,805){$k_2$}
\rput(180,930){$F_{k_1}$}}
\rput(320,930){$F_{k_2}$}

\end{pspicture}
\caption{The 2-encryption problem consists in recovering the keys $k_1$ and $k_2$, given a plaintext $P$ and a corresponding ciphertext $C$}
\end{center}
\end{figure}
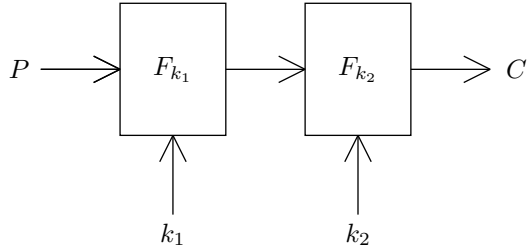

It is easy to prove that the complexity of the problem is independent of the pair $(P,C)$. An algorithm for a given pair $(P,C)$
can be easily adapted to solve the problem for another pair $(P', C')$.
Let $\sigma$ be the permutation that transposes $P'$ and $P$ and $C'$ and $C$.
It suffices to conjugate every permutation of an input $\mathcal F = \{F_1, \ldots, F_N\}$ with $\sigma$ before running the algorithm
for $KE_2^{P,C}$. When it is clear from the context and unnecessary for proofs, we drop the exponent and write only $KE_2$.

The next theorem shows an upper bound on the complexity of $KE_2$. 
The idea is simply to reduce $KE_2$ to~$CF$.

\begin{theorem}
\label{thm:KE2UB}
There exists a quantum algorithm that solves $KE_2$ in time $O(N^{2/3})$ using memory $O(N^{2/3})$.
\end{theorem}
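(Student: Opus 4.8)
The plan is to make precise the reduction from $KE_2$ to $CF$ suggested above and then invoke Theorem~\ref{thm:boundCF}. The starting point is the algebraic observation that the defining relation $F_{k_2}(F_{k_1}(P)) = C$ is equivalent, after applying $F_{k_2}^{-1}$ to both sides, to $F_{k_1}(P) = F_{k_2}^{-1}(C)$. This invites us to define two functions on the key space, $F : [N] \to [M]$ by $F(k) = F_k(P)$, and $G : [N] \to [M]$ by $G(k) = F_k^{-1}(C)$. With these definitions a couple $(k_1, k_2)$ solves $KE_2^{P,C}$ if and only if $F(k_1) = G(k_2)$, that is, if and only if $(k_1, k_2)$ is a claw of $(F, G)$; and the $KE_2$ promise that the solving couple is unique is exactly the promise that $(F, G)$ has a single claw.

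First I would verify that the reduction is query-efficient. Evaluating $F$ at a point $k$ amounts to computing $F_k(P)$, and evaluating $G$ at $k$ to computing $F_k^{-1}(C)$; since the oracle $\mathcal{F}$ gives access to both $F_k$ and $F_k^{-1}$, each query to $F$ or $G$ is simulated by a single (quantum) query to $\mathcal{F}$. Thus the reduction preserves query complexity up to a constant factor. The two functions live on domains of size $N$, so the resulting instance is a $CF$ instance on a total of $2N$ inputs; running the algorithm of Theorem~\ref{thm:boundCF} therefore costs $O\!\left((2N)^{2/3}\right) = O(N^{2/3})$ queries and time and uses $O(N^{2/3})$ memory, after which the returned claw $(k_1, k_2)$ is exactly the pair of keys sought.

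The one point that deserves care --- and the only real obstacle --- is that the functions $F$ and $G$ produced by the reduction need not be one-to-one, whereas $CF$ is stated for injective functions: several keys may send $P$ to the same block, creating spurious monochromatic collisions within $F$ (or within $G$). These are harmless, and I would dispatch this as follows. The uniqueness of the solution to $KE_2$ forces the claw value to be attained by a single preimage on each side, so the genuine cross-collision is unambiguous; and the quantum walk underlying Theorem~\ref{thm:boundCF} searches specifically for a bichromatic pair $(k_1, k_2)$ with $F(k_1) = G(k_2)$, marking only such pairs. Monochromatic collisions are never marked, so they change neither the set of marked vertices nor the $O(N^{2/3})$ running time, and the bound of Theorem~\ref{thm:boundCF} applies to our instance verbatim. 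I expect the algebraic reduction and the query-simulation bookkeeping to be entirely routine, with this injectivity remark being the only subtlety worth spelling out.
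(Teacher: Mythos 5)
Your proposal is correct and takes essentially the same route as the paper's proof: the paper constructs exactly the same pair of functions $G_1(x) = F_x(P)$, $G_2(y) = F_y^{-1}(C)$ on the key space and invokes Theorem~\ref{thm:boundCF} to conclude. The one difference is that you explicitly handle the fact that these functions need not be one-to-one as the $CF$ definition requires --- a point the paper's proof silently glosses over --- and your resolution (monochromatic collisions are never marked by the claw-finding procedure, so correctness and the $O(N^{2/3})$ bounds are unaffected) is sound.
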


\begin{proof}
Assume that $\mathcal F=\{F_1, \ldots, F_N\}$ is an input of $KE_2^{P,C}$.
We construct the following input to $CF$: a pair of functions $G_1, G_2: [N] \rightarrow [M]$, 
defined by $G_1(x) = F_x(P)$ and $G_2(y) = F_y^{-1}(C)$.
Suppose that the algorithm for $CF$ returns a pair $(x^*,y^*)$. This implies that it is the unique pair such that 
$F_{x^*}(P)=F_{y^*}^{-1}(C)$, leading to $F_{y^*}(F_{x^*}(P))=C$. It is therefore the pair of keys used to encrypt $P$.
\end{proof}

We also prove the following lower bound on the problem.

\begin{theorem}
\label{thm:KE2LB}
A quantum algorithm that solves $KE_2$ needs $\Omega(N^{2/3})$ quantum queries to the input $\mathcal F = \{F_1, \ldots, F_N\}$, including queries to inverse permutations, except with vanishing probabilities.
\end{theorem}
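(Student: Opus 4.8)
The plan is to invoke the general (negative-weight) adversary method $\advpm$, which lower-bounds the bounded-error quantum query complexity up to constant factors, and to manufacture a good adversary matrix for $KE_2$ out of a good one for $CF$. The first thing to pin down is the query model, since the theorem insists on counting queries to the inverse permutations as well. I would therefore encode an input $\mathcal F = \{F_1,\dots,F_N\}$ as a single string whose coordinates are indexed both by forward pairs $(i,X)$, returning $F_i(X)$, and by backward pairs $(i,Y)$, returning $F_i^{-1}(Y)$. A quantum query to a permutation or to its inverse is then just one query to one coordinate, so an $\advpm$ bound taken over this enlarged coordinate set automatically charges the inverse queries demanded by the statement.

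Next I would fix, by Theorem~\ref{thm:boundCF}, an optimal adversary matrix $\GCF$ for $CF$, so that $\norm{\GCF}/\max_i \norm{\GCF\circ\Delta_i} = \Omega(N^{2/3})$, where $\Delta_i$ is the $0/1$ matrix carrying a $1$ in position $(u,v)$ exactly when the $i$-th coordinates of the two $CF$-inputs $u,v$ differ. To lift it, I would run the embedding of the upper bound (Theorem~\ref{thm:KE2UB}) in reverse: send a $CF$-instance $(G_1,G_2)$ to the $KE_2$-input obtained by setting $F_x(P)=G_1(x)$ and $F_x^{-1}(C)=G_2(x)$ and completing each $F_x$ to a permutation of $[M]$ by one fixed rule $\Phi$ depending only on the ordered value pair $(G_1(x),G_2(x))$, never on $x$ nor on the rest of the instance. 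Because the reduction to a single plaintext--ciphertext pair lets us take $M\gg N$, there is ample room in $[M]$ for $\Phi$ to be well defined and for $(G_1,G_2)\mapsto\mathcal F$ to be injective; crucially, $\Phi$ can be made equivariant for the common symmetry group of $CF$ and $KE_2$ (value relabelings and key relabelings), which is the precise sense in which ``the two problems have the same symmetries.'' I then set $\GKE[\mathcal F,\mathcal F'] := \GCF[(G_1,G_2),(G_1',G_2')]$ on inputs in the image of $\Phi$ and $0$ elsewhere. This is the content of Lemma~\ref{lm:WCLB}.

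It then remains to evaluate the two quantities in the adversary ratio. For the numerator, $\GKE$ is a relabelled copy of the Hermitian matrix $\GCF$ bordered by zero rows and columns, so $\norm{\GKE}=\norm{\GCF}$. For the denominator I would split the coordinates of $\mathcal F$ into two \emph{honest} families and the \emph{filler} families. The forward coordinate $(i,P)$ reads off $G_1(i)$ and the backward coordinate $(i,C)$ reads off $G_2(i)$, so their difference matrices are exactly the $CF$ difference matrices and contribute $\max_j\norm{\GCF\circ\Delta_j}$. For a filler coordinate $(i,X)$ with $X\neq P$, or $(i,Y)$ with $Y\neq C$, the rule $\Phi$ makes the queried value a fixed function of $(G_1(i),G_2(i))$, so two inputs can differ there only if they already differ at $(i,P)$ or at $(i,C)$; the filler difference mask is thus supported inside the union of the two honest masks on key $i$.

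The main obstacle is exactly this last step: deducing that $\norm{\GKE\circ\Delta_{\text{filler}}}$ does not exceed $\max_j\norm{\GCF\circ\Delta_j}$. Support containment of the $0/1$ masks is \emph{not} enough, since Hadamard multiplication by a $0/1$ pattern can increase the spectral norm of a matrix with negative entries such as $\GCF$. This is where the symmetry structure earns its keep: the optimal $\GCF$ can be taken block-diagonal in the isotypic decomposition of the symmetry action, the coordinate masks act within this decomposition, and on each block the filler mask restricts to a copy of an honest mask. Computing the norms block by block then yields the genuine inequality $\norm{\GKE\circ\Delta_{\text{filler}}}\le\max_j\norm{\GCF\circ\Delta_j}$ rather than a mere support bound. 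Taking the maximum over all coordinates gives $\advpm(KE_2)=\Omega\!\left(\norm{\GCF}/\max_j\norm{\GCF\circ\Delta_j}\right)=\Omega(N^{2/3})$, and since $\advpm$ lower-bounds the number of queries of any algorithm whose success probability stays bounded away from $0$, the stated bound (``except with vanishing probabilities'') follows.
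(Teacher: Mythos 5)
The decisive gap is exactly the step you flag as ``the main obstacle,'' and your proposed repair does not close it. Your zero-padded embedding ($\GKE$ equal to $\GCF$ on the image of $\Phi$ and $0$ elsewhere) destroys the very symmetry that would be needed to compare filler masks with honest masks. The automorphisms that preserve your embedded family are key relabelings of $[N]$ and value relabelings $\rho$ of $[M]$ fixing $P$ and $C$, acting by $F_i\mapsto\rho\circ F_i\circ\rho^{-1}$; under conjugation such a symmetry sends the coordinate mask $\Delta_{(i,X)}$ to $\Delta_{(i,\rho^{-1}(X))}$, and since $\rho$ fixes $P$, no element of this group ever carries a filler coordinate $(i,X)$ with $X\neq P$ to the honest coordinate $(i,P)$. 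Consequently the isotypic argument has nothing to power it: invariance of $\GCF$ under the automorphism group does make it block-diagonal (this much is the standard automorphism principle), but it does not make a filler mask ``restrict to a copy of an honest mask'' on each block --- that identification would require precisely the group element that your embedding has eliminated. Absent it, you are back to bounding $\norm{\GKE\bullet\Delta_{\mathrm{filler}}}$ from support containment alone, which, as you yourself observe, is invalid for Hadamard products with negative-weight matrices. The paper's proof of Lemma~\ref{lm:WCLB} sidesteps this entirely by \emph{not} embedding: it indexes $\GKE$ by all inputs and sets $\GKE[u,v]=\GCF[u^{(P,C)},v^{(P,C)}]$, so that $\GKE=\GCF\otimes\mathbb J_{D\times D}$, where $D$ is the number of inputs sharing a projection. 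On the full input set the additional symmetry $u\mapsto u^\sigma=\{F_i\circ\sigma\}$, with $\sigma$ the transposition $(x^*\ P)$ --- a map that does \emph{not} preserve your image --- conjugates the filler mask $\Delta_{(x^*,k,1)}$ exactly onto the honest mask $\Delta_{(P,k,1)}$ while carrying $\GKE$ to another matrix of the same tensor form with the same norms. Every step is an exact equality under permutation conjugation; no spectral inequality for masked matrices is ever invoked. If you insist on an embedding-style proof, you must either enlarge the embedded family until it is closed under such right-compositions (at which point you have essentially rebuilt the paper's projection construction) or prove the filler-mask norm bound directly, which is an open-ended spectral problem that your sketch does not address.

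There is a second, independent gap: even granting the worst-case bound, you have only proved the analogue of Lemma~\ref{lm:WCLB}, not Theorem~\ref{thm:KE2LB}. The clause ``except with vanishing probabilities'' is not a restatement of the fact that $\advpm$ lower-bounds bounded-error algorithms; it asserts that for any attack with bounded average error, the probability over a \emph{random input} of terminating within $o(N^{2/3})$ queries is $o(1)$. The paper derives this from the worst-case bound by a separate three-step self-reducibility argument: Markov's inequality to pass from expected to worst-case query count, randomization of the input by composing each $F_i$ with an independent random permutation $\sigma_i$ to convert average-case error into worst-case error, and a repetition argument showing that if $\Pr[Q\le q]=\delta$ with $q=o(N^{2/3})$ then one can build an attack with $O(q/\delta)$ expected queries, forcing $\delta=o(1)$. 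Your final sentence conflates the worst-case statement with this random-input statement, so even with the adversary-matrix construction repaired, the theorem as stated would remain unproved.
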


A lower bound on query complexity translates into a
lower bound on time complexity. Therefore, combining the upper and lower bounds on time complexity with the upper
bound on memory gives the following corollary.

\begin{corollary}
\label{cor:TS}
The most time-efficient attack on $KE_2$ has time-space product $O(N^{4/3})$.
\end{corollary}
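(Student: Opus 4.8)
The plan is to obtain the corollary by combining the matching bounds on time complexity already established for $KE_2$ with the memory footprint of the time-optimal algorithm, so the work is essentially bookkeeping on top of Theorems~\ref{thm:KE2UB} and~\ref{thm:KE2LB}. First I would promote the query lower bound into a time lower bound: a quantum algorithm spends at least unit time per oracle call, so Theorem~\ref{thm:KE2LB} forces any solver of $KE_2$ to run in time $\Omega(N^{2/3})$. Reading Theorem~\ref{thm:KE2UB} in the other direction supplies an algorithm running in time $O(N^{2/3})$. Together these pin the optimal time complexity at $\Theta(N^{2/3})$ and, in particular, certify that the reduction-based algorithm of Theorem~\ref{thm:KE2UB} is time-optimal up to constants.

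Next I would read off the space cost of that time-optimal algorithm directly from Theorem~\ref{thm:KE2UB}, which states it uses memory $O(N^{2/3})$: the reduction to $CF$ invokes the time-optimal claw-finding walk, whose stored data structure has size $O(N^{2/3})$ by Theorem~\ref{thm:boundCF}. Hence the most time-efficient attack achieves time $O(N^{2/3})$ together with space $O(N^{2/3})$, and multiplying yields a time-space product of $O(N^{2/3})\cdot O(N^{2/3}) = O(N^{4/3})$, as claimed. Note that it suffices to exhibit a single time-optimal algorithm meeting both bounds; no minimization over competing optimal algorithms is required, since we only want an upper bound on the product for the best-in-time regime.

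The only step demanding care is conceptual rather than technical: fixing the meaning of \emph{most time-efficient attack} and checking that the two theorems are quoted in the same parameter regime, so that the time lower bound and the memory upper bound genuinely pertain to one and the same algorithm. All the substantive content — the $N^{2/3}$ adversary lower bound and the matching reduction — already lives in Theorems~\ref{thm:KE2LB} and~\ref{thm:KE2UB}, so I expect no real obstacle here. I would close by observing that this $O(N^{4/3})$ figure \emph{exceeds} the $O(N)$ time-space product of a Grover-based exhaustive search over the $N^2$ key pairs; thus the time-optimal quantum attack is not space-optimal, which is precisely the qualitative contrast the corollary is meant to expose.
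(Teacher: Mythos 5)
Your proposal is correct and follows essentially the same route as the paper: the paper likewise converts the query lower bound of Theorem~\ref{thm:KE2LB} into a time lower bound, matches it against the $O(N^{2/3})$-time, $O(N^{2/3})$-memory algorithm of Theorem~\ref{thm:KE2UB}, and multiplies to get $O(N^{4/3})$. Your closing remark even mirrors the paper's own observation that Grover search over the $N^2$ key pairs achieves a better time-space product of $O(N)$, which is the intended qualitative point of the corollary.
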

Notice that a simple Grover search leads to time $O(N)$ with logarithmic space, thus having a time-space product of $N$,
better than the best known algorithm for $CF$.

The rest of the section is now devoted to proving the lower bound on the attack.
There are two important differences between $CF$ and $KE_2$:
\begin{itemize}
\item It is possible to query a permutation $F_i$ on any input $X\in [M]$.
\item It is possible to query inverse permutations $F_i^{-1}$.
\end{itemize}

These differences imply that
there is no obvious query-preserving reduction from $CF$  to $KE_2$.
One important issue is that, since $F_i$ is a permutation, querying $F_i(X)$  with $X \neq P$
gives information on $F_i(P)$. Intuitively, we believe this information is so small that it may not be useful
to solve the problem.
This is even more problematic
in the quantum setting, where a single query can be made in superposition of all the inputs,
preventing the strategy that consists in building the reduction on the fly, when the queries to the input are done.
To overcome this issue, we use a specific tool from quantum query complexity known as the generalized
adversary method.

We prove a slightly stronger lower bound result by considering
the decision version of $KE_2$. In this case, the problem is to determine if there
exists keys $k_1, k_2$ such that $C=F_{k_2}(F_{k_1}(P))$. We denote this version $d{-}KE_2$. 
Of course, an algorithm for the search version can easily be transformed into an algorithm for the decision version. Consequentely, a
lower bound for the decision version is also a lower bound on the search problem.
We compare this algorithm with the decision version of claw finding, denoted $d{-}CF$.
The bounds given in Theorem~\ref{thm:boundCF} apply equivalently to the decision version of these problems.

The proof of Theorem~\ref{thm:KE2LB} is in two steps. In Lemma~\ref{lm:WCLB}, we prove a lower bound in the \emph{worst-case quantum query complexity} using the generalized adversary method. In the second step, which is the proof of Theorem~\ref{thm:KE2LB}, we use a self-reducibility argument to prove that the probability of solving the problem on a random input with less queries than in the worst case is vanishing.

\begin{lemma}
\label{lm:WCLB}
A quantum algorithm that solves $d{-}KE_2$ needs $\Omega(N^{2/3})$ quantum queries to the input $\mathcal F = \{F_1, \ldots, F_N\}$, including queries to inverse permutations.
\end{lemma}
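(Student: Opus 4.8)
The plan is to prove the lower bound for $d{-}KE_2$ by transferring a known adversary bound for $d{-}CF$ (which is $\Omega(N^{2/3})$ by Theorem~\ref{thm:boundCF}) to the new problem, using the generalized adversary method. The key conceptual point, stressed in the text, is that there is no query-preserving reduction from $CF$ to $KE_2$, so I cannot simply invoke a reduction. Instead I would construct an adversary matrix $\Gamma_{KE_2}$ for $d{-}KE_2$ directly, guided by an optimal adversary matrix $\Gamma_{CF}$ for $d{-}CF$, and show that $\Gamma_{KE_2}$ certifies the same asymptotic bound. Recall that the generalized adversary method gives, for a decision problem, a lower bound of $\Omega\!\left(\frac{\|\Gamma\|}{\max_i \|\Gamma \circ \Delta_i\|}\right)$, where $\Gamma$ is a Hermitian matrix supported on pairs of inputs with differing output values, $\Delta_i$ is the zero/one matrix marking inputs that differ at query position $i$, and $\circ$ is the entrywise (Hadamard) product. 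So the whole task reduces to exhibiting a good $\Gamma_{KE_2}$ and bounding $\|\Gamma_{KE_2}\|$ from below and each $\|\Gamma_{KE_2}\circ\Delta_i\|$ from above.

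**First I would** set up the input spaces carefully. An instance of $d{-}KE_2$ is a tuple $\mathcal F=\{F_1,\dots,F_N\}$ of permutations of $[M]$, with the output determined by whether some composition sends $P$ to $C$. The query positions are now indexed not just by a key $k$ and point $X$, but also by a direction (forward query to $F_k$ or inverse query to $F_k^{-1}$); this richer query set is exactly what breaks the naive reduction. The strategy I would follow is to embed the claw-finding structure into $KE_2$ by restricting attention to inputs whose only ``relevant'' values are $F_k(P)$ and $F_k^{-1}(C)$, mirroring the functions $G_1(x)=F_x(P)$ and $G_2(y)=F_y^{-1}(C)$ used in the upper bound (Theorem~\ref{thm:KE2UB}). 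Concretely, I would define $\Gamma_{KE_2}$ so that its nonzero entries connect a yes-instance and a no-instance precisely as $\Gamma_{CF}$ connects the corresponding claw/no-claw instances under this embedding, and zero elsewhere. The guiding principle the authors name is that the two problems \emph{have the same symmetries}: the symmetric group acting on keys and on the block set $[M]$ acts on both problem families, and an optimal adversary matrix can be taken to respect these symmetries, so a symmetry-compatible $\Gamma_{CF}$ should lift to a symmetry-compatible $\Gamma_{KE_2}$ with comparable spectral norm.

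**The main obstacle**, and the step I expect to be hardest, is bounding $\|\Gamma_{KE_2}\circ\Delta_i\|$ for the \emph{new} kinds of queries — forward queries $F_k(X)$ with $X\neq P$ and inverse queries $F_k^{-1}(X)$ with $X\neq C$ — which have no analogue in $CF$. The worry flagged in the text is that because each $F_k$ is a permutation, querying $F_k(X)$ for $X\neq P$ leaks a small amount of information about $F_k(P)$ (namely it rules out that value), so these queries are not entirely useless and one must show their adversary weight is negligible. I would handle this by choosing $\Gamma_{KE_2}$ supported only on input pairs that differ in the claw-relevant coordinates, so that for any query position $i$ corresponding to an irrelevant point, the matrix $\Gamma_{KE_2}\circ\Delta_i$ is either zero or has operator norm dominated by the permutation constraint, contributing only lower-order terms. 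The technical heart is then a careful norm estimate showing $\max_i\|\Gamma_{KE_2}\circ\Delta_i\|=O\!\left(\max_i\|\Gamma_{CF}\circ\Delta_i\|\right)$ while $\|\Gamma_{KE_2}\|=\Omega(\|\Gamma_{CF}\|)$, whence the ratio is preserved and the generalized adversary method yields the claimed $\Omega(N^{2/3})$ bound for $d{-}KE_2$.

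**Finally,** having the worst-case bound of Lemma~\ref{lm:WCLB}, the passage to Theorem~\ref{thm:KE2LB} is the separate self-reducibility argument the authors describe: one argues that solving a \emph{random} instance with substantially fewer than $N^{2/3}$ queries would, by averaging over the symmetry orbit and re-randomizing the embedded plaintext–ciphertext data, solve a worst-case instance with the same query budget, contradicting the lemma except with vanishing probability. I would therefore treat the construction and spectral analysis of $\Gamma_{KE_2}$ as the crux of the present lemma and defer the randomized amplification to the proof of the theorem.
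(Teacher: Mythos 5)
Your high-level plan is the same as the paper's: take an adversary matrix $\Gamma_{CF}$ certifying the optimal $\Omega(N^{2/3})$ bound for $d{-}CF$ (using tightness of the generalized adversary bound, Theorem~\ref{thm:advq}), transport it to an adversary matrix for $d{-}KE_2$, and show the ratio $\|\Gamma\|/\max_q\|\Gamma\bullet\Delta_q\|$ is preserved. You also correctly isolate the crux: the queries $F_k(X)$ with $X\neq P$ and $F_k^{-1}(X)$ with $X\neq C$, which have no counterpart in $CF$. But your proposed resolution of that crux has a genuine gap, and as stated it cannot work. You want to support $\Gamma_{KE_2}$ ``only on input pairs that differ in the claw-relevant coordinates,'' so that $\Gamma_{KE_2}\bullet\Delta_q$ is zero or negligible for irrelevant $q$. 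No such support exists: the $F_k$ are permutations, and two permutations that disagree at $P$ (or whose inverses disagree at $C$) necessarily disagree at at least one further point. Hence every pair of inputs with distinct projections --- exactly the pairs your matrix must connect in order to inherit anything from $\Gamma_{CF}$ --- also differs at irrelevant positions, so $\Gamma_{KE_2}\bullet\Delta_q\neq 0$ for many irrelevant $q$; for a generic choice of representatives these norms can be of order $\|\Gamma_{KE_2}\|$ itself, in which case the matrix certifies only a trivial constant lower bound. Moreover, your hoped-for conclusion that irrelevant queries contribute ``lower-order terms'' is not what is true even in the correct argument: they contribute exactly as much as relevant ones; the point is that they contribute no \emph{more}.

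The paper closes this gap with precisely the symmetry you mention in passing but never deploy. First, the matrix is the full pullback along the projection $u\mapsto u^{(P,C)}$, namely $\Gamma_{KE_2}[u,v]=\Gamma_{CF}[u^{(P,C)},v^{(P,C)}]$; since every fiber of the projection has the same size $D$, this gives the tensor structure $\Gamma_{KE_2}=\Gamma_{CF}\otimes\mathbb{J}_{D\times D}$, hence $\|\Gamma_{KE_2}\|=D\|\Gamma_{CF}\|$, and for relevant queries $q\in\mathcal{I}$ one gets $\Gamma_{KE_2}\bullet\Delta_q=(\Gamma_{CF}\bullet\Delta_{\tilde q})\otimes\mathbb{J}_{D\times D}$, so these norms factor as well. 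Second --- and this is the step missing from your proposal --- if $\max_q\|\Gamma_{KE_2}\bullet\Delta_q\|$ were attained at an irrelevant query $q^*=(x^*,k^*,1)$ with $x^*\neq P$, one conjugates by the relabeling of inputs $u\mapsto u^\sigma=\{F_i\circ\sigma\}_{i\in[N]}$ with $\sigma$ the transposition $(x^*\ P)$. This conjugation is a simultaneous row/column permutation $\Pi_\sigma$, so it preserves all spectral norms; it sends $\Delta_{q^*}$ to $\Delta_{q^{**}}$ with $q^{**}=(P,k^*,1)\in\mathcal{I}$, and it sends $\Gamma_{KE_2}$ to a matrix $\Gamma'=\Gamma_{CF}'\otimes\mathbb{J}_{D\times D}$ where $\Gamma_{CF}'$ is a row/column permutation of $\Gamma_{CF}$. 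Thus every irrelevant query is, up to a symmetry that changes nothing spectrally, a relevant query of a matrix with the same structure, and the maximum can be evaluated inside $\mathcal{I}$, where it equals $D\max_{\tilde q}\|\Gamma_{CF}\bullet\Delta_{\tilde q}\|$. If you tried to salvage your restricted-support variant by averaging it over this group action, you would recover the paper's tensor-product matrix --- which is a sign that the fiber structure plus conjugation, not a smallness estimate, is the mechanism that makes the lemma go through.
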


\begin{proof}

We consider an optimal adversary matrix for $d{-}CF$ and use it to build an explicit adversary matrix for $d{-}KE_2$.
Details on the adversary method can be found in the Appendix. In this proof, we need the fact that the generalized adversary
method gives optimal lower bounds on quantum query complexity. 
Therefore, we can choose an
adversary matrix for $CF$ such that the adversary value given by this matrix is 
exactly the quantum query complexity of the function.
Let $\GCF$ be this adversary matrix for $d{-}CF$. The rows and columns of $\GCF$ are indexed by inputs to the problem,
that is, pairs of function $G_1,G_2$.

We now construct an adversary matrix $\GKE$ for the problem $d{-}KE_2^{P,C}$. The rows and columns of $\GKE$ are indexed by collections of permutations $\mathcal F = \{F_1, \ldots F_N\}$. Given a row $u$, (resp. a column $v$), denote $u^{(P,C)}$ the row (resp. $v^{(P,C)}$ the column) of $\GCF$ corresponding to functions $G_1,G_2$ defined by $G_1(k)=F_k(P)$ and $G_2(k)=F_k^{-1}(C)$.
The input $u^{(P,C)}$ for the problem $d{-}CF$ is called the projection of the input $u$ for the problem $d{-}KE_2$. This
operation is represented on figure~\ref{fig:projection}.
We simply define the entries of $\GKE$ by $\GKE[u,v]=\GCF[u^{(P,C)}, v^{(P,C)}]$.
Our goal is now to apply Theorem~\ref{thm:advq}. In order to apply it, we need to compute the values of
$\norm{\GKE}$ and $\max_q \norm{\GKE \bullet \Delta_q}$, where ``$\bullet$'' denotes the entry-wise product of 
matrices\footnote{This product is usually denoted $\circ$, we use here a different notation to avoid confusion with
the composition of permutations that is also used in this proof}.

This definition implies that $\GKE$ has a simple tensor product structure. Notice that for two pairs $(u,v)$ and $(u',v')$ of inputs of $d{-}KE_2$ projecting onto the same pair of inputs $(\tilde u, \tilde v)$ of $CF$, we get by definition $\GKE[u,v]=\GKE[u',v']= \GCF[\tilde u, \tilde v]$. Moreover, the number of inputs to $d{-}KE_2$ projecting onto the same of $d{-}CF$ is a constant. Denoting this constant $D$ and  $\mathbb J_{D \times D}$ the all-one matrix of dimension $D$, we get $\GKE = \GCF \otimes \mathbb J_{D\times D}$.
This immediately gives the relation $$\norm{\GKE} = D \norm{\GCF}.$$ 

The next step is to bound $\max_i{\norm{ \GKE \bullet \Delta_i}}$ where $\Delta_i[u,v] = 0$ if 
$u_i=v_i$, and 1 otherwise. A query to an input to $d{-}KE_2$ is a triplet $(x,k,b)$ where $x \in [M], k \in [N]$ and $b \in \{-1,1\}$ indicates if the query is to a permutation or to its inverse. The query thus returns $F_k^b(x)$.

We show that it is sufficient to consider a special set of queries. The set of queries $\mathcal I$ consists of queries $q=(x,k,b)$ where $x=P$ if $b=1$ and $x=C$ if $b=-1$.
We prove that if the value $\norm{ \GKE \bullet \Delta_q}$ is not
maximized by a query from $\mathcal I$, it is possible to find another matrix with both the same norm and the same tensor product structure such that $\norm{ \Gamma' \bullet \Delta_q}$ is maximized by a query from~$\mathcal I$. Formally, we prove the following claim.

\noindent
{\em Claim:}
There exists a permutation of rows and columns of $\GKE$ leading to a matrix
 $\Gamma'$ such that $\Gamma' =\GCF' \otimes \mathbb J_{D\times D}$, where $\GCF'$ is obtained by permuting the rows and columns of $\GCF$ and
$\max_q{\norm{ \GKE \bullet \Delta_q}}
=\max_{q \in \mathcal I }{\norm{ \Gamma' \bullet \Delta_q}}$.

We first explain how to finish the proof assuming this claim.
A query $q \in \mathcal I$ projects onto a query $\tilde q$ to inputs of  $d{-}CF$. 
Formally, for a query $q \in \mathcal I$, there exists a query $\tilde q$ to inputs of $d{-}CF$
such that for any input $u$ of $d{-}KE_2$, $u_q =  u^{(P,C)} _{\tilde q}$.
If $q=(x,k,b)\in \mathcal I$, the query $\tilde q = (k,b)$ on $u^{(P,C)}$ returns $G_1(k)$ if $b=1$ and $G_2(k)$ if $b=-1$. 
By definition of $u^{(P,C)}$, we have $G_1(k)=F_k(P)$ and $G_2(k)=F_k^{-1}(C)$ and thus,  $u_q =  u^{(P,C)} _{\tilde q}$. 
This implies that for $q\in \mathcal I$, 
$(\Gamma' \bullet \Delta_q) [u,v] = (\GCF'\bullet \Delta_{\tilde q})[u^{(P,C)},v^{(P,C)}]$. The tensor product structure ensures that
$\Gamma' \bullet \Delta_q = (\GCF' \otimes \mathbb J_{D\times D}) \bullet \Delta_{q} = (\GCF' \bullet \Delta_{\tilde q})\otimes
\mathbb J_{D\times D}$, and thus

$$\max_q{\norm{ \GKE \bullet \Delta_i}} = \max_{q \in \mathcal I}{\norm{ \Gamma' \bullet \Delta_q}} = D \max_{\tilde q}\norm{\GCF' \bullet \Delta_{\tilde q}}.$$
The last equality is true because since $\mathcal I$ and queries to inputs of $d{-}CF$ have the same cardinality, they are in one-to-one correspondance. Maximizing over queries to $\mathcal I$ is therefore equivalent to maximizing over queries to inputs to $d{-}CF$.
Finally, this shows that the quantum query complexity of $KE_2$ is at least

\begin{eqnarray*}
\min_q \frac{\norm{\GKE}}{\norm{\GKE \bullet \Delta_q}} &=& \min_q \frac{\norm{\Gamma'}}{\norm{\Gamma' \bullet \Delta_q}} = \min_{\tilde q} \frac{D\norm{\GCF'}}{D\norm{\GCF' \bullet \Delta_{\tilde q}}} \\
&=& \min_{\tilde q} \frac{\norm{\GCF}}{\norm{\GCF \bullet \Delta_{\tilde q}}} = \Omega(N^{2/3}).
\end{eqnarray*}

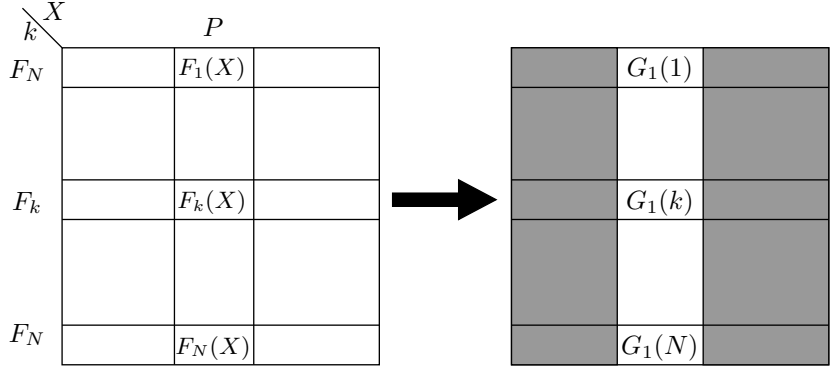
\begin{figure}[h]
\begin{center}
\psset{xunit=.5pt,yunit=.5pt,runit=.5pt}
\begin{pspicture}(620,300)
{
\newrgbcolor{curcolor}{0 0 0}
\pscustom[linewidth=1.01058233,linecolor=curcolor]
{
\newpath
\moveto(35.833328,257.16666)
\lineto(275.833328,257.16666)
\lineto(275.833328,17.16666)
\lineto(35.833328,17.16666)
\closepath
}
}
{
\newrgbcolor{curcolor}{0 0 0}
\pscustom[linewidth=1,linecolor=curcolor]
{
\newpath
\moveto(35.833328,127.16666)
\lineto(275.833328,127.16666)
}
}
{
\newrgbcolor{curcolor}{0 0 0}
\pscustom[linewidth=1,linecolor=curcolor]
{
\newpath
\moveto(120.833328,257.16666)
\lineto(120.833328,17.16666)
}
}
{
\newrgbcolor{curcolor}{0 0 0}
\pscustom[linewidth=1,linecolor=curcolor]
{
\newpath
\moveto(180.833328,257.16666)
\lineto(180.833328,17.16666)
}
}
{
\newrgbcolor{curcolor}{0 0 0}
\pscustom[linewidth=1,linecolor=curcolor]
{
\newpath
\moveto(35.833328,257.16666)
\lineto(5.833328,287.16666)
}
}
\rput(30,285){$X$}
\rput(12,270){$k$}
\rput(9.5,140){$F_k$}
\rput(9.5,40){$F_N$}
\rput(9.5,240){$F_N$}
\rput(150,273){$P$}
\rput(149,140){{\small $F_k(X)$}}
\rput(149,241){{\small $F_1(X)$}}
\rput(150,30){{\small $F_N(X)$}}
\rput(488,140){$G_1(k)$}
\rput(488,241){$G_1(1)$}
\rput(488,30){$G_1(N)$}
{
\newrgbcolor{curcolor}{0 0 0}
\pscustom[linestyle=none,fillstyle=solid,fillcolor=curcolor]
{
\newpath
\moveto(285.833328,147.16666)
\lineto(335.833328,147.16666)
\lineto(335.833328,137.16666)
\lineto(285.833328,137.16666)
\closepath
}
}
{
\newrgbcolor{curcolor}{0 0 0}
\pscustom[linewidth=0.91287094,linecolor=curcolor]
{
\newpath
\moveto(285.833328,147.16666)
\lineto(335.833328,147.16666)
\lineto(335.833328,137.16666)
\lineto(285.833328,137.16666)
\closepath
}
}
{
\newrgbcolor{curcolor}{0 0 0}
\pscustom[linewidth=1.01058233,linecolor=curcolor]
{
\newpath
\moveto(375.833328,257.16666)
\lineto(615.833328,257.16666)
\lineto(615.833328,17.16666)
\lineto(375.833328,17.16666)
\closepath
}
}
{
\newrgbcolor{curcolor}{0 0 0}
\pscustom[linestyle=none,fillstyle=solid,fillcolor=curcolor]
{
\newpath
\moveto(335.833328,157.16666)
\lineto(335.833328,127.16666)
\lineto(364.166658,142.16666)
\closepath
}
}
{
\newrgbcolor{curcolor}{0 0 0}
\pscustom[linewidth=1,linecolor=curcolor]
{
\newpath
\moveto(335.833328,157.16666)
\lineto(335.833328,127.16666)
\lineto(364.166658,142.16666)
\closepath
}
}
{
\newrgbcolor{curcolor}{0.60000002 0.60000002 0.60000002}
\pscustom[linestyle=none,fillstyle=solid,fillcolor=curcolor]
{
\newpath
\moveto(375.833328,257.16666)
\lineto(455.833328,257.16666)
\lineto(455.833328,17.16666)
\lineto(375.833328,17.16666)
\closepath
}
}
{
\newrgbcolor{curcolor}{0 0 0}
\pscustom[linewidth=1.02150786,linecolor=curcolor]
{
\newpath
\moveto(375.833328,257.16666)
\lineto(455.833328,257.16666)
\lineto(455.833328,17.16666)
\lineto(375.833328,17.16666)
\closepath
}
}
{
\newrgbcolor{curcolor}{0.60000002 0.60000002 0.60000002}
\pscustom[linestyle=none,fillstyle=solid,fillcolor=curcolor]
{
\newpath
\moveto(520.833328,257.16666)
\lineto(615.833328,257.16666)
\lineto(615.833328,17.16666)
\lineto(520.833328,17.16666)
\closepath
}
}
{
\newrgbcolor{curcolor}{0 0 0}
\pscustom[linewidth=1.02150786,linecolor=curcolor]
{
\newpath
\moveto(520.833328,257.16666)
\lineto(615.833328,257.16666)
\lineto(615.833328,17.16666)
\lineto(520.833328,17.16666)
\closepath
}
}
{
\newrgbcolor{curcolor}{0 0 0}
\pscustom[linewidth=1,linecolor=curcolor]
{
\newpath
\moveto(35.833328,157.16666)
\lineto(275.833328,157.16666)
}
}
{
\newrgbcolor{curcolor}{0 0 0}
\pscustom[linewidth=1,linecolor=curcolor]
{
\newpath
\moveto(35.833328,227.16666)
\lineto(275.833328,227.16666)
}
}
{
\newrgbcolor{curcolor}{0 0 0}
\pscustom[linewidth=1,linecolor=curcolor]
{
\newpath
\moveto(35.833328,47.16666)
\lineto(275.833328,47.16666)
}
}
{
\newrgbcolor{curcolor}{0 0 0}
\pscustom[linewidth=1,linecolor=curcolor]
{
\newpath
\moveto(375.833328,127.16666)
\lineto(615.833328,127.16666)
}
}
{
\newrgbcolor{curcolor}{0 0 0}
\pscustom[linewidth=1,linecolor=curcolor]
{
\newpath
\moveto(375.833328,157.16666)
\lineto(615.833328,157.16666)
}
}
{
\newrgbcolor{curcolor}{0 0 0}
\pscustom[linewidth=1,linecolor=curcolor]
{
\newpath
\moveto(375.833328,227.16666)
\lineto(615.833328,227.16666)
}
}
{
\newrgbcolor{curcolor}{0 0 0}
\pscustom[linewidth=1,linecolor=curcolor]
{
\newpath
\moveto(375.833328,47.16666)
\lineto(615.833328,47.16666)
}
}
\end{pspicture}
\caption{\label{fig:projection}An input to $KE_2$ can be represented as a table whose rows and columns are
indexed by $k$ and $X$, respectively. Each line of the table is a permutation $F_i$.
The projecting onto an input to $CF$ is a restriction to one column.
To build a complete input to $CF$, one also has to restrict the table of permutations $F^{-1}_k$.}
\end{center}
\end{figure}

It only remains to prove the above claim.
Suppose that $\norm{\GKE \bullet \Delta_q}$ is maximized by
a query $q^*=(x^*,k^*,b^*)$. The intuition of the claim is that if
$q^* \notin \mathcal I$, it can still be projected onto a query to inputs of $d{-}CF$. By mapping these to inputs of the original problems,
we get the new matrix $\Gamma'$.

Assume that $b^*=1$ and $x^* \neq P$ (the proof is similar with $b^*=-1$).
Let $\sigma$ denote the transposition $(x^* \ P)$.
For an input u = $\{F_i\}_{i \in [N]}$, denote $u ^ \sigma = \{F_i\circ \sigma\}_{i \in [N]}$ and define $\Gamma'$ as $\Gamma'[u,v] = \GKE [u^\sigma, v^\sigma]$.
The operation $u \mapsto u^\sigma$ corresponds to a permutations of rows and columns
of $\GKE$. Denote $\Pi_\sigma$ this permutation, so that $\Gamma' = \Pi_\sigma^\dagger \GKE\Pi_\sigma$.
Similarly, we have
$\Pi_\sigma^\dagger \Delta_{q^*}\Pi_\sigma = \Delta_{q^{**}}$,
where $q^{**}=(P,k^*,b^*)$.

Finally, we show that $\Gamma' = \GCF' \otimes \mathbb J_{D\times D}$ 
for some matrix $\GCF'$. 
The sets $\{u^{(P,C)}\}_{u}$ and $\{u^{(x^*,C)}\}_{u}$ are equal and therefore, there exists a bijection $\tau$ sending
$u^{(P,C)}$ to $u^{(x,C)}$.
This bijection satisfies $(u^\sigma)^{(P,C)} = \tau(u^{(P,C)})$.
The matrix $\GCF'$ is simply defined as $\GCF'[\tilde u, \tilde v]=\GCF[\tau(\tilde u),\tau(\tilde v)]$.
This gives
\begin{eqnarray*}
\Gamma'[u,v] &=& \GKE[u^\sigma,v^\sigma] = \GCF[(u^\sigma)^{(P,C)}, (v^\sigma)^{(P,C)}],\\
	& =& \GCF[\tau(u^{(P,C)}), \tau(u^{(P,C)})] = \GCF'[u^{(P,C)},v^{(P,C)}].
\end{eqnarray*}

\hide{
The operation $\tilde u \mapsto \tau(\tilde u)$ again corresponds to a permutation of the rows and columns, denoted $\Pi_\tau$.
We finally get 
\begin{eqnarray*}
\Gamma' &=& \Pi_\sigma^\dagger \GKE \Pi_\sigma = \Pi_\sigma^\dagger (\GCF \otimes \mathbb J_{D\times D})\Pi_\sigma,\\
	&=& (\Pi_\tau^\dagger \GCF \Pi_\tau )\otimes \mathbb J_{D\times D} = \GCF' \otimes \mathbb J_{D \times D}.
\end{eqnarray*}
}
This immediately leads to $\Gamma' = \GCF' \otimes \mathbb J_{D \times D}$, and finishes the proof of the claim.
\end{proof}

\begin{proof}[Proof of Theorem~\ref{thm:KE2LB}]
We now show that the lower bound proved in Lemma~\ref{lm:WCLB} holds on random inputs, except with vanishing probability.
The proof is in three steps. In the first step, we prove that the lower bound holds for the average number of queries made by the attack. In the second step, we show that the lower bound also holds when considering the average error.
Finally, we show that the lower bound hold for random inputs, except with vanishing probability.

\noindent
{\bf Step 1:} Let $\mathcal A$ be an attack such that, given an input $\mathcal F =\{F_1, \ldots, F_N\}$ to $KE_2^{P,C}$,
returns $(k_1,k_2)$ such that $F_{k_2}(F_{k_1}(P))=C$ after $q$ queries on average over the inputs and is successful with probability at least $1-\eps$ for any input.
Consider then the following attack: run $\mathcal A$ for $q/\eps$ queries. If $\mathcal A$ stopped, then output the same value. Otherwise output a random value. By Markov's inequality, this new attach is successful with probability at least $1-2\eps$ and therefore,
by Lemma~\ref{lm:WCLB}, $q/\eps \geq \Omega(N^{2/3})$.

\noindent
{\bf Step 2:} Let $\mathcal A'$ be an attack such that, given an input $\mathcal F =\{F_1, \ldots, F_t\}$ to $KE_2^{P,C}$,
returns $(k_1,k_2)$ such that $F_{k_2}(F_{k_1}(P))=C$ after $q$ queries on average and is successful with average probability $1-\eps$, both over the inputs.
Consider now the following attack: Choose $N$ random permutations $\{\sigma_1, \ldots \sigma_N\}$, and run
the $\mathcal A'$ on the input $\mathcal F' = \{\sigma_1 \circ F_1, \ldots, \sigma_N \circ F_N\}$. This is equivalent
to running $\mathcal A'$ on a random input, so that the error made by the attack is now $1-\eps$ for any input.
From Step~1, we get again $q = \Omega(N^{2/3})$.

\noindent
{\bf Step 3:} Let $\mathcal A''$ be an attack that solves $KE_2^{P,C}$ with error $\eps$ on average. Denote $Q$ the
random variable indicating the number of queries made by $\mathcal A''$ and fix $q=o(N^{2/3})$.
Denote $\delta = \Pr[Q\leq q]$. We want to prove that $\delta$ is vanishing.
Fix two constants $k$ and $k'$ and consider the following attack. Repeat $k/\delta$ times the following steps:
\begin{enumerate}
\item Choose $N$ random permutation $\sigma_1, \ldots, \sigma_N$. Run $\mathcal A''$ on a random input as explained
in Step~2, and stop it after $k'q$ queries.
\item If $\mathcal A''$ stopped, then output the same value and stop.
\end{enumerate}
If after repeating these two steps $k/\delta$ times, no output was produced, output random values.

The number of queries of this attack is at most $kk'q/\delta$. Choosing $k$ and $k'$ large enough, the probability that at least one iteration of the loop is successful can be made arbitrarily close to one, so that the error probability of this attack is arbitrarily close to
$\eps$. We have constructed an attack that makes $O(q/\delta)$ queries on average and is successful with average probability $1-\eps$.
From Step~2, we get that $q/\delta = \Omega(N^{2/3})$, which implies $\delta = o(1)$.

\end{proof}

In order to measure the gains obtained by quantum algorithms, we use a quantity similar to the one used by 
Dinur, Dunkelman, Keller and Shamir~\cite{ddks12}. We consider 
$\log{C}/\log{Q}$, where $C$ is a classical complexity measure and $Q$ is its quantum counterpart.
Intuitively, this corresponds to the factor by which the key size is decreased
when the attack is quantized.
For example the gain in time for Grover search over classical exhaustive search is 2.
The gain for the time-space product is similar because both Grover and exhaustive search require only constant space.

The gain for the $\MITM$ attack is very different. The gain in time is $3/2$, using the algorithm presented above.
Since this algorithm is optimal, it is the largest possible gain.
The gain for the time-space product of this algorithm is also $3/2$.
Interestingly, there exists other algorithms for $ED$ leading to different gains. For example,
the algorithm based on amplitude amplification of~\cite{ED05} leads to a gain in time of $4/3$ (over
the classical $\MITM$ attack), and a gain in time-space product of 8/5, better than the most time-efficient attack.
The most time-efficient algorithm is not the one leading to the most important gain in
time-space, and an attacker that is willing to pay with more time can save on the time-space product.

\begin{table}[h]
\begin{center}
\begin{tabular}{|c|c|c|}
\cline{2-3} \multicolumn{1}{c|}{}	& Time & Time-space \\
   \hline
Exhaustive search& $2$ & $2$ \\
   \hline
MITM & $3/2=1.5$ & $3/2=1.5$ \\
   \hline
Amplitude amplification~\cite{ED05} & $4/3\simeq 1.3$ & $8/5=1.6$\\
   \hline   
\end{tabular}
\caption{\label{table:2EK} Gains of attacks against 4-encryption}
\end{center}
\end{table}

\section{Quantum attack against $4$-encryption}
\label{sec:4enc}

We now apply tools from quantum complexity theory to study the case of four iterated encryptions.
We assume again that 
the attacker knows sufficiently many 
pairs $(P_i,C_i)$ of plaintextcipher-text in order to 
ensure
that, with very high probability, there is only one quadruple of keys
$(k_1, k_2, k_3, k_4)$ that satisfies the relations $C_i=E_{k_4}(E_{k_3}(E_{k_2}(E_{k_1}(P_i))))$ for all $i$.

The classical $\MITM$ attack can be applied in this situation by considering the pairs $(k_1,k_2)$ and $(k_2,k_3)$
as single keys. This requires time and memory $O(N^2)$, and thus time-space product $O(N^4)$.
This was the best known algorithm the recent dissection attack~\cite{ddks12}, which still
requires time $O(N^2)$ but memory only $O(N)$. This significantly improves the time-space product
to $O(N^3)$.

The basic idea of the dissection attack is to make an exhaustive search of the intermediate value after two encryptions. The candidate
values are then checked using a $\MITM$ procedure.
A notable difference with the previous case is that the complexity of the problem is now a function of both $M$ and $N$.
We assumed in the beginning that $M$ and $N$ were of comparable sizes. 
In order to keep this assumption, we cannot assume anymore that the attacker has a single pair of datas. Instead, we
assume that it has enough pairs, all encrypted with the same datas, to be sure that, with large probability, there is only one quadruple of keys consistant with all the datas. It can be shown that four pairs of plaintext with corresponding ciphertext.

\begin{definition}
The 4-Key Extraction ($KE_4^{P,C}$) problem with $P=(P_1,P_2,P_3,P_4) ,C=(C_1,C_2,C_3,C_4) \in [M]^4$ takes input $\mathcal{F}$ where $\mathcal{F}= \{F_1, \ldots, F_N\}$ is a collection of
permutations of $[M]$. The goal of the problem is to output $(k_1,k_2,k_3,k_4)$ such that $F_{k_4}(F_{k_3}(F_{k_2}(F_{k_1}(P_i))))=C_i$ for all $i$.
\end{definition}

The attack uses the quantum $\MITM$ algorithm presented in the previous section as
a subroutine. The basic idea is to compose this algorithm with a Grover search of the value
in the middle. Quantum query complexity has the remarkable property, derived from the generalized adversary method, that
for compatible functions $f$ and $g$, $Q(f(g(x^1), \ldots, g(x^n))) = O(Q(f) Q(g))$~\cite{hls07}.

For simplicity, consider first the decision version of the problem.
Given permutations $\mathcal F = \{F_1,\ldots, F_N\}$ and a pair $(P, C)$, the problem $d{-}KE_4$ is to decide if there exists a 4-tuple of keys $({k_1}, {k_2}, {k_3}, {k_4})$ such that $C = F_{k_4}(F_{k_3}(F_{k_2}(F_{k_1}(P))))$.

Consider the following functions:
\begin{eqnarray*}
f_X:  (\mathcal S_{[M]})^N & \rightarrow & \{0,1\}\\
	\{F_1, \ldots, F_N\} & \mapsto & \begin{cases}1 & \text{if there exists $k_1$ and $k_2$ such that $F_{k_2}(F_{k_1}(P))=X$,}\\
					    0 & \text{otherwise;}
					    \end{cases}
\end{eqnarray*}
and
\begin{eqnarray*}
g_X:(\mathcal S_{[M]})^N & \rightarrow & \{0,1\}\\
\{F_1, \ldots, F_N\} & \mapsto & \begin{cases}1 & \text{if there exists $k_1$ and $k_2$ such that $F_{k_2}(F_{k_1}(X))=C$,}\\
					    0 & \text{otherwise.}
					    \end{cases}
\end{eqnarray*}
The function $f_X$ returns the boolean answer of the problem $d{-}KE_2^{(P,X)}$ and $g_X$ the answer of
$d{-}KE_2^{(X,C)}$.
Finally, denote $\search$ the function that, on input $u \in \{0,1\}^M$, returns 1 if there exists $i$ such that $u_i = 1$ and 0 otherwise.

The problem $d{-}KE_4: (\mathcal S_{[M]})^N \rightarrow \{0,1\}$ can be expressed as the following composition:
\begin{equation}
\label{eqn:product}
d{-}KE_4 = \search \circ (f_0 \wedge g_0, f_1 \wedge g_1, \ldots, f_N \wedge g_N).
\end{equation}
It is well known that a quantum query complexity is multiplicative under function composition~\cite{hls07},
which leads to a $O(M^{1/2}N^{2/3})$ upper bound on the number of queries needed to solve the problem.
However, the composition theorem that proves this upper bound holds for \emph{quantum query} complexity, whereas
we are interested here in bounding the \emph{time} and \emph{space} used by a quantum algorithm.
For this purpose, we give an explicit composed algorithm.
This algorithm is a quantized version of the $Dissect_2(4,1)$ algorithm of Dinur, Dunkleman, Keller and Shamir~\cite{ddks12}.
In the classical setting, this algorithm achieves the best known time-space product for 4-encryption.
We now remove the assumption that the key is unique, and suppose that the
attacker has four pairs $(P_i, C_i)$.
We also assume that $M$ and $N$ are of comparable sizes and state our result as a function of $N$.

\begin{theorem}
\label{thm:KE4UB}
There exists a quantum algorithm that solves $KE_4$ in time $O(N^{7/6})$ and using memory $O(N^{2/3})$.
The time-space product for this attack is $O(N^{11/6})$.
\end{theorem}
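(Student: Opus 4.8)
The plan is to quantize the classical $Dissect_2(4,1)$ attack of~\cite{ddks12}, which is exactly the composition written in~\eqref{eqn:product}: an exhaustive search over the value $X$ in the middle (after two rounds), whose candidates are tested by a $\MITM$ procedure on each half. The quantization replaces both layers by their quantum analogues. The outer exhaustive search over $X\in[M]$ becomes an amplitude amplification, costing $O(\sqrt{M})=O(\sqrt{N})$ iterations. Each iteration tests a candidate $X$ by invoking the quantum $\MITM$ algorithm of Theorem~\ref{thm:KE2UB} on the two halves $KE_2^{(P,X)}$ and $KE_2^{(X,C)}$, each running in time $O(N^{2/3})$ with memory $O(N^{2/3})$. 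Multiplying the two layers gives time $O(\sqrt{N}\cdot N^{2/3})=O(N^{7/6})$; the inner quantum walk dominates the memory at $O(N^{2/3})$ while the outer search contributes only the $O(\log N)$ qubits holding $X$, so the space remains $O(N^{2/3})$ and the time-space product is $O(N^{7/6}\cdot N^{2/3})=O(N^{11/6})$, as claimed.

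For the \emph{query} count the bound is essentially free: applying the composition theorem for quantum query complexity~\cite{hls07} to the decomposition~\eqref{eqn:product}, with $\search$ over $M\approx N$ inputs each evaluated by an $f_X\wedge g_X$ of query complexity $O(N^{2/3})$, yields $O(M^{1/2}N^{2/3})=O(N^{7/6})$ queries. The real content of the theorem is to realize this with matching \emph{time} and with the stated \emph{space}, for the search version with several pairs. Concretely I would keep a register for $X\in[M]$ and, as the marking predicate of the amplification, run the claw-finding walk to evaluate the upper half $KE_2^{(P_1,X)}$ and the lower half $KE_2^{(X,C_1)}$; the additional pairs $(P_i,C_i)$ are folded into the consistency test so that the surviving quadruple is unique with high probability, and once amplification returns a marked $X^\ast$ the inner walk is re-run in finding mode to output $(k_1,k_2,k_3,k_4)$.

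The hard part will be passing from the query bound to time and, above all, to the $O(N^{2/3})$ space bound, which is where the framework of quantum walks rather than the query composition theorem is needed. Classically the inner $\MITM$ stores hash tables of size $O(N)$; the quantum-walk implementation of claw finding must reproduce the same matching using a data structure of size $O(N^{2/3})$, and the consistency checks across the several pairs must be woven into a \emph{single-solution} claw-finding instance so that the $O(N^{2/3})$ subroutine of Theorem~\ref{thm:KE2UB} applies with $O(1)$ cost per accessed element. The second delicate point is composing a \emph{bounded-error} quantum walk inside amplitude amplification: the subroutine must be made reversible (run it, coherently copy out only the decision bit, then uncompute so its $O(N^{2/3})$-qubit workspace is disentangled from the search register), and its error must not accumulate over the $\Omega(\sqrt{N})$ outer rounds.

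For this last issue the cleanest route is to invoke the fact that amplitude amplification tolerates a constant-error detection oracle at only constant multiplicative overhead, which preserves the $O(N^{7/6})$ bound; the cruder alternative boosts the inner success probability to $1-o(1/\sqrt{N})$ by $O(\log N)$ repetitions at the cost of a logarithmic factor I would rather avoid. I expect the complexity arithmetic to be routine and the genuine work to lie in these two places: fitting the inner matching into $O(N^{2/3})$ space and managing the error of the composed walk.
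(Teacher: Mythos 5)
Your proposal matches the paper's proof in essence: the paper also quantizes $Dissect_2(4,1)$ as an outer search over the middle value $X\in[M]$ whose checking procedure is the quantum $\MITM$ of Theorem~\ref{thm:KE2UB}, obtaining time $O(M^{1/2}N^{2/3})=O(N^{7/6})$ with memory $O(N^{2/3})$ dominated by the checker; the paper merely phrases the outer layer as an MNRS walk on the complete graph (with $\eps=1/M$, $\delta=1-\frac{1}{M-1}$, trivial setup and update), which it itself notes is just Grover/amplitude amplification. Your added discussion of uncomputing the inner walk's workspace and controlling its error across the $\Omega(\sqrt N)$ outer iterations addresses details the paper's proof passes over silently, but it is a refinement of the same argument, not a different route.
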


\begin{proof}
Assume that the attacker knows the pairs $(P_i,C_i)_{i=1..4}$. This is sufficient to ensure that
only a single quadruple of keys is consistent with the datas.
We devise a search algorithm that consists in a composition of a Grover search with
a quantum $\MITM$ attack presented in Section~\ref{sec:qmitm}.
We use the framework of quantum walks in order to compose these two
procedures. The details of the theorems that we use are given in the Appendix.

The walk is designed on the complete graph, whose vertices are indexed by elements from $[M]$
who are candidates for the value $X$ after two encryptions.
Once the correct value for $X$ is found, it suffices to make $O(N^{2/3})$ queries to the input in
order to find the keys, using some of the datas $(P_i,C_i)$. This correct value is unique, and
the goal of the quantum walk is to find it.

Theorem~\ref{thm:MNRS} given in the Appendix states that
the cost to find the correct value $X$ is upper bounded by
$\SSS+\frac{1}{\sqrt \eps}(\frac 1 {\sqrt \delta} \UU + \CC),$
where 
\begin{itemize}
\item $\SSS$ is the cost for setting up a quantum register in a state that corresponds to the stationary
of the classical random walk on the graph
distribution
\item $\UU$ is the cost of moving from one node to an adjacent one,
\item $\CC$ is the cost of checking if the value $X$ is the correct one,
\item $\eps$ is the probability of finding the correct value~$X$, and
\item  $\delta$ is the eigenvalue gap for the complete graph.
\end{itemize}


In our case, the 
setup is to build a uniform superposition of all states $\ket X$, which can
be made with no query to the input, and constant time. The update is the same procedure.
The checking consists in running
a quantum $\MITM$ attack several times in order to check that the value $X$ satisfies
$E_{k_2}(E_{k_1}(P_i))=X$
and $E_{k_4}(E_{k_3}(X))=C_i$ for all $i$ for some value $k_1, k_2, k_3, k_4$.
The probability of finding the correct value is $\eps = 1/M$.
Finally, the eigenvalue gap for the complete graph is $\delta = 1-\frac 1 {M-1}$.
Overall, this leads to an attack that makes 
 $O(M^{1/2}N^{2/3})$ queries, and takes the same amount of time.
Moreover, the checking procedure is the only procedure using non-constant memory.
The attack uses in total $O(N^{2/3})$ memory, leading to a time-space product bounded by $O(M^{1/2}N^{4/3})$.

\begin{figure}[h]
\begin{center}
\psset{xunit=.5pt,yunit=.5pt,runit=.5pt}
\begin{pspicture}(0,530)(780,750)
{
\newrgbcolor{curcolor}{0 0 0}
\pscustom[linewidth=1,linecolor=curcolor]
{
\newpath
\moveto(100,720)
\lineto(180,720)
\lineto(180,620)
\lineto(100,620)
\closepath
}
}
{
\newrgbcolor{curcolor}{0 0 0}
\pscustom[linewidth=1,linecolor=curcolor]
{
\newpath
\moveto(240,720)
\lineto(320,720)
\lineto(320,620)
\lineto(240,620)
\closepath
}
}
{
\newrgbcolor{curcolor}{0 0 0}
\pscustom[linewidth=1,linecolor=curcolor]
{
\newpath
\moveto(140,560.00000262)
\lineto(140,620.00000262)
}
}
{
\newrgbcolor{curcolor}{0 0 0}
\pscustom[linewidth=1,linecolor=curcolor]
{
\newpath
\moveto(130,600.00000262)
\lineto(140,620.00000262)
}
}
{
\newrgbcolor{curcolor}{0 0 0}
\pscustom[linewidth=1,linecolor=curcolor]
{
\newpath
\moveto(150,600.00000262)
\lineto(140,620.00000262)
}
}
{
\newrgbcolor{curcolor}{0 0 0}
\pscustom[linewidth=1,linecolor=curcolor]
{
\newpath
\moveto(280,560.00000262)
\lineto(280,620.00000262)
}
}
{
\newrgbcolor{curcolor}{0 0 0}
\pscustom[linewidth=1,linecolor=curcolor]
{
\newpath
\moveto(270,600.00000262)
\lineto(280,620.00000262)
}
}
{
\newrgbcolor{curcolor}{0 0 0}
\pscustom[linewidth=1,linecolor=curcolor]
{
\newpath
\moveto(290,600.00000262)
\lineto(280,620.00000262)
}
}
{
\newrgbcolor{curcolor}{0 0 0}
\pscustom[linewidth=1,linecolor=curcolor]
{
\newpath
\moveto(40,670.00000262)
\lineto(100,670.00000262)
}
}
{
\newrgbcolor{curcolor}{0 0 0}
\pscustom[linewidth=1,linecolor=curcolor]
{
\newpath
\moveto(80,680.00000262)
\lineto(100,670.00000262)
}
}
{
\newrgbcolor{curcolor}{0 0 0}
\pscustom[linewidth=1,linecolor=curcolor]
{
\newpath
\moveto(80,660.00000262)
\lineto(100,670.00000262)
}
}
{
\newrgbcolor{curcolor}{0 0 0}
\pscustom[linewidth=1,linecolor=curcolor]
{
\newpath
\moveto(320,670.00000262)
\lineto(380,670.00000262)
}
}
{
\newrgbcolor{curcolor}{0 0 0}
\pscustom[linewidth=1,linecolor=curcolor]
{
\newpath
\moveto(360,680.00000262)
\lineto(380,670.00000262)
}
}
{
\newrgbcolor{curcolor}{0 0 0}
\pscustom[linewidth=1,linecolor=curcolor]
{
\newpath
\moveto(360,660.00000262)
\lineto(380,670.00000262)
}
}
{
\newrgbcolor{curcolor}{0 0 0}
\pscustom[linewidth=1,linecolor=curcolor]
{
\newpath
\moveto(40,670.00000262)
\lineto(100,670.00000262)
}
}
{
\newrgbcolor{curcolor}{0 0 0}
\pscustom[linewidth=1,linecolor=curcolor]
{
\newpath
\moveto(80,680.00000262)
\lineto(100,670.00000262)
}
}
{
\newrgbcolor{curcolor}{0 0 0}
\pscustom[linewidth=1,linecolor=curcolor]
{
\newpath
\moveto(80,660.00000262)
\lineto(100,670.00000262)
}
}
{
\newrgbcolor{curcolor}{0 0 0}
\pscustom[linewidth=1,linecolor=curcolor]
{
\newpath
\moveto(180,670.00000262)
\lineto(240,670.00000262)
}
}
{
\newrgbcolor{curcolor}{0 0 0}
\pscustom[linewidth=1,linecolor=curcolor]
{
\newpath
\moveto(220,680.00000262)
\lineto(240,670.00000262)
}
}
{
\newrgbcolor{curcolor}{0 0 0}
\pscustom[linewidth=1,linecolor=curcolor]
{
\newpath
\moveto(220,660.00000262)
\lineto(240,670.00000262)
}
}
{
\newrgbcolor{curcolor}{0 0 0}
\pscustom[linewidth=1,linecolor=curcolor]
{
\newpath
\moveto(380,720)
\lineto(460,720)
\lineto(460,620)
\lineto(380,620)
\closepath
}
}
{
\newrgbcolor{curcolor}{0 0 0}
\pscustom[linewidth=1,linecolor=curcolor]
{
\newpath
\moveto(520,720)
\lineto(600,720)
\lineto(600,620)
\lineto(520,620)
\closepath
}
}
{
\newrgbcolor{curcolor}{0 0 0}
\pscustom[linewidth=1,linecolor=curcolor]
{
\newpath
\moveto(420,560.00000262)
\lineto(420,620.00000262)
}
}
{
\newrgbcolor{curcolor}{0 0 0}
\pscustom[linewidth=1,linecolor=curcolor]
{
\newpath
\moveto(410,600.00000262)
\lineto(420,620.00000262)
}
}
{
\newrgbcolor{curcolor}{0 0 0}
\pscustom[linewidth=1,linecolor=curcolor]
{
\newpath
\moveto(430,600.00000262)
\lineto(420,620.00000262)
}
}
{
\newrgbcolor{curcolor}{0 0 0}
\pscustom[linewidth=1,linecolor=curcolor]
{
\newpath
\moveto(560,560.00000262)
\lineto(560,620.00000262)
}
}
{
\newrgbcolor{curcolor}{0 0 0}
\pscustom[linewidth=1,linecolor=curcolor]
{
\newpath
\moveto(550,600.00000262)
\lineto(560,620.00000262)
}
}
{
\newrgbcolor{curcolor}{0 0 0}
\pscustom[linewidth=1,linecolor=curcolor]
{
\newpath
\moveto(570,600.00000262)
\lineto(560,620.00000262)
}
}
{
\newrgbcolor{curcolor}{0 0 0}
\pscustom[linewidth=1,linecolor=curcolor]
{
\newpath
\moveto(320,670.00000262)
\lineto(380,670.00000262)
}
}
{
\newrgbcolor{curcolor}{0 0 0}
\pscustom[linewidth=1,linecolor=curcolor]
{
\newpath
\moveto(360,680.00000262)
\lineto(380,670.00000262)
}
}
{
\newrgbcolor{curcolor}{0 0 0}
\pscustom[linewidth=1,linecolor=curcolor]
{
\newpath
\moveto(360,660.00000262)
\lineto(380,670.00000262)
}
}
{
\newrgbcolor{curcolor}{0 0 0}
\pscustom[linewidth=1,linecolor=curcolor]
{
\newpath
\moveto(600,670.00000262)
\lineto(660,670.00000262)
}
}
{
\newrgbcolor{curcolor}{0 0 0}
\pscustom[linewidth=1,linecolor=curcolor]
{
\newpath
\moveto(640,680.00000262)
\lineto(660,670.00000262)
}
}
{
\newrgbcolor{curcolor}{0 0 0}
\pscustom[linewidth=1,linecolor=curcolor]
{
\newpath
\moveto(640,660.00000262)
\lineto(660,670.00000262)
}
}
{
\newrgbcolor{curcolor}{0 0 0}
\pscustom[linewidth=1,linecolor=curcolor]
{
\newpath
\moveto(460,670.00000262)
\lineto(520,670.00000262)
}
}
{
\newrgbcolor{curcolor}{0 0 0}
\pscustom[linewidth=1,linecolor=curcolor]
{
\newpath
\moveto(500,680.00000262)
\lineto(520,670.00000262)
}
}
{
\newrgbcolor{curcolor}{0 0 0}
\pscustom[linewidth=1,linecolor=curcolor]
{
\newpath
\moveto(500,660.00000262)
\lineto(520,670.00000262)
}
}
\rput(23,675){$P$}
\rput(680,675){$C$}
\rput(140,545){$k_1$}
\rput(279,545){$k_2$}
\rput(420,545){$k_3$}
\rput(559,545){$k_4$}
\rput(140,675){$F_{k_1}$}
\rput(279,675){$F_{k_2}$}
\rput(420,675){$F_{k_3}$}
\rput(559,675){$F_{k_4}$}
\rput(347,740){$X$}
\qdisk(345,670){2.5pt}
\pscustom[linewidth=1,linecolor=curcolor]
{
\newpath
\moveto(345,670)
\lineto(345,730)
}

\end{pspicture}
\caption{The attack on 4-encryption is an exhaustive search over the central value $X$ combined with
a $MITM$ algorithm to check the consistency with the datas $P,C$}
\end{center}
\end{figure}
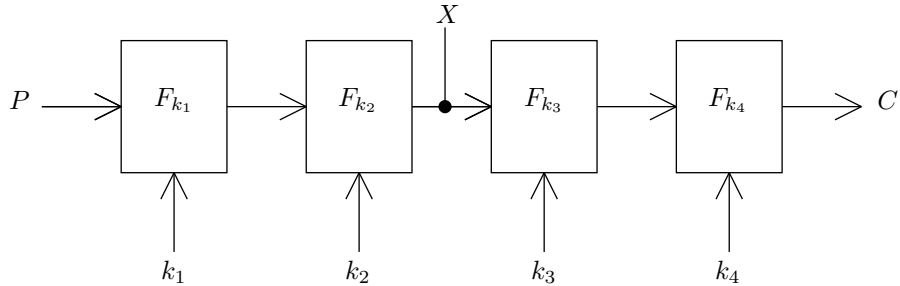

\end{proof}

Our approach does not give any lower bound on the number of queries to the input.
Since our attack is a composition of Grover's algorithm with a quantum $\MITM$ attack,
expressing the problem as a composition could allow to apply product theorems on the
adversary method~\cite{hls07,LMRSS10,BHKKL11}, leading to a lower bound.
Unfortunately, known theorems do not apply to the composition
given in Equation~\ref{eqn:product}.
For example, the composition theorem proven by H\o yer, Lee and \v Spalek
requires the inner function to be boolean and to act on independent inputs.
In our case, all the inner functions act on the same input.
Equivalently, the expression given in Equation~\ref{eqn:product} could be completed
with a non-boolean function that maps the input to $N$ copies of it.
The hypothesis of the composition theorem used to prove security of Merkle puzzles in a
quantum world are very contrived an do not apply here~\cite{BHKKL11}.

We compare three different attacks against 4-encryption:
\begin{itemize}
\item exhaustive search over the whole key-space,
\item classical and quantum $\MITM$ attack where both the first and the second pair of keys are treated as single keys,
\item the dissection attack and its quantized version.
\end{itemize}
The quantization of exhaustive search gives a very good gain, but its time-performance is poor.
Applying the $\MITM$ attack from the previous section gives both poor gains and poor performances.
The quantization of the dissection attack gives better gains than what we observed for 2-encryption, but
we have no indications that this algorithm is optimal.

\begin{table}[h]
\begin{center}
\begin{tabular}{|c|c|c|}
\cline{2-3} \multicolumn{1}{c|}{}	& Time & Time-space \\
   \hline
Exhaustive search& $2$ & $2$ \\
   \hline
MITM & $3/2 =1.5$ & $3/2=1.5$ \\
   \hline
Dissection & $12/7 \simeq 1.7$ & $18/11 \simeq 1.63$\\
   \hline   
\end{tabular}
\caption{\label{table:4EK} Gains of attacks against 4-encryption}
\end{center}
\end{table}

\section{Conclusion}

Iterative encryption is  a natural way to amplify the security of block ciphers.
It is well known that this procedure do not lead to the expected results, even for classical attacks
in the case of 2-encryption. The $\MITM$ attack allows an attacker to save a lot of time,
if he is willing to pay with more space.

Equipped with a quantum computer, an attacker can run a quantum algorithm for collision finding.
We have proven that this approach leads to optimal time complexity for extracting keys, making crucial
use of the generalized adversary method, a tool from quantum complexity theory.
Extracting a pair of keys $(k_1,k_2)$ takes time $N^{2/3}$, where $N$ is the size of the key space
of a single encryption.
An interesting corollary is that the most time-efficient attack is not known to give the best gain
for time-space product. An attacker can pay with more time and save on space, but with a ratio
that differs fundamentally from the classical case.

We have then studied the case of $4$-encryption, for which we have given a quantized version
of the dissection attack of Dinur, Dunkleman, Keller and Shamir. This quantization uses the framework
of quantum walks, another important concept of quantum algorithms and complexity theory. 
We do not not know if this quantum attack is optimal with respect to the number of queries or time,
and we have only compared it with the best currently know classical attack.
But we can already conclude that this quantization gives better gains in time and time-space product
than the optimal algorithm for 2-encryption.

This seems to indicate that successive encryptions is less resistant against quantum attacks 
than it is
against classical ones. This intuition could be confirmed by proving lower bounds on quantum attacks,
eventually allowing to consider the most time-efficient attacks in both cases.
We have not been able to answer this question for 4-encryption as we did for 2-encryption.
Of course, one possibility is that the dissection algorithm is not optimal,
but improving it is a challenging question in classical cryptography. 
Proving a quantum lower bound or improving the classical upper bounds would
both give more insight on security amplification.

Another crucial element that misses in our analysis is a study of the $r$-encryption case.
Classically, the dissection algorithm that we have discussed can be generalized
iteratively applied, leading to attacks on $r$-encryption for arbitrary values of $r$.
In our case, there is a barrier that prevents the scaling of our results.
The composition theorem derived from the generalized adversary method, or the
quantum-walk framework both introduce multiplicative factors, and
it seems harder to understand how the complexity grows as a function of $r$.

Our work can be understood as a proof of principle that quantum complexity and algorithms theory
has developed powerful tools to tackle important questions in cryptography and post-quantum cryptography.
Maybe a better use of these tools can lead to improving and extending the results presented here. Or maybe these
tools need to be sharpened in order to apply to specific cryptographic situations.
In both cases, we hope that our work will motivate further interactions between classical cryptographers
and quantum computer scientists. Such interactions seem crucial to establish a serious approach to
post-quantum cryptography.
In particular, iterating block ciphers is not a very good procedure to amplify security against classical
adversary. We hope that the techniques presented here will be applied to 
more efficient cryptographic procedures in the future.

\section*{Aknowledgements}

The author is grateful to Anne Canteaut, Maria Naya-Plasencia, Fr\'ed\'eric Magniez, Robin Kothari, Andris Ambainis and Gilles Brassard for stimulating discussions and pointers to references.
M.K. acknowledges financial support from acknowledge financial support from ANR retour des post-doctorants NLQCC (ANR-12-PDOC-0022- 01).

\bibliographystyle{alpha}
\bibliography{biblio}

\appendix

\section{Quantum query complexity}

\subsection*{The model}
In the quantum query complexity model, the goal is to compute some function $F:S \rightarrow T$ on input $x$.
For simplicity, we assume in this short presentation that
$S\subseteq \{0,1\}^n$.
The input is given as a black-box and accessed through {\em queries}.
A classical query $\ell$ to the input $x$ returns $x_\ell$.

In the quantum setting, the algorithm is executed
on an architecture with three quantum registers: an input register $I$, a query register $Q$ and a working register $W$.
The state of the quantum computer when the algorithm starts is ${\ket x}_I{\ket{0,0}}_Q{\ket 0}_W$.
There are several equivalent formalism to model quantum queries.
Without loss of generality,
we consider a quantum query as an operator $$O:{\ket x}_I \ket{i,y}_Q  \longmapsto \ket {x}_I \ket{i, y + x_i \mod 2}_Q.$$
Considering non-boolean inputs is equivalent, up to logarithmic factors.

The quantum algorithm supposed to compute $f$ alternates between quantum query operations $O_i$ and work operations that are
unitaries $U_i$ acting on the query register and the working register.
After $t$ queries to the input, the state of the quantum computer
is $U_t O_t U_{t-1} \ldots O_2 U_1 O_1 {\ket x}_I{\ket{0,0}}_Q{\ket 0}_W$.

We assume that the working register contains $\log |T|$ qubits to encode the value $f(x)$.
The last step of the algorithm is then to measure these qubits and output the value that is obtained.
Finally, an immediate corollary of the model
is that the time complexity of $f$ is at least equal to the query complexity of the algorithm.
The generalized adversary method can be used to prove a lower bound on query complexity, which in turn
bounds the time complexity.

\subsection*{Quantum walks}

We first review the paradigm of quantum walks.
We use this tool to design an attack against 4-encryption. More precisely,
we use it to combine an exhaustive search with a collision finding algorithm.
The framework of quantum walks allows to easily analyze the ressources used
by the algorithm.

A search algorithm aims to find a marked element in a finite set $U$.
Classically, it
can be seen as a walk on a graph whose nodes are indexed by subsets of elements of $U$.
Each step of the algorithm consists in 
walking from one node to another. The algorithm can also maintain a data structure that
is updated at each step. After each move, the algorithm checks if the node contains
a marked element. The algorithm terminates when a marked element is found.

Magniez, Nayak, Roland and Santha have designed a generic theorem in order to quantize
search algorithms expressed as random walks.
The cost of the resulting quantum algorithm can be written as a function of 
{\SSS}, {\UU} and {\CC}. These are the cost of
setting up the quantum register in a state that corresponds to the stationary distribution,
moving unitarily from one node to an adjacent node,
and checking if a node is marked,
respectively. 

\begin{theorem}~\cite{MNRS}
\label{thm:MNRS}
Let $P$ be an ergodic and reversible Markov chain with eignvalue gap $\delta$, and let $\eps>0$
be a lower bound on the probability that an element chosen from the stationary distribution
of $P$ is marked, whenever the set of marked element is non-empty. Then, there exists a quantum algorithm which
finds, with high probability, a marked element if there is any at cost of order $\SSS+\frac 1 {\sqrt \eps}(\frac 1 {\sqrt \delta} \UU +\CC)$.
\end{theorem}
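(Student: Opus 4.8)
The statement is the quantum walk search theorem of Magniez, Nayak, Roland and Santha, so I would reconstruct its proof from the two pillars of the quantum walk framework: Szegedy's quantization of a reversible Markov chain, and amplitude amplification made robust to an imperfectly implemented reflection. The plan is to build a unitary walk operator from $P$, identify the stationary superposition as its unique phaseless eigenvector, and then amplify the amplitude sitting on the marked subspace.

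First I would quantize $P$. Following Szegedy, I would build a \emph{walk operator} $W(P)$ acting on the edge space spanned by states $\ket x \ket y$ (each vertex carrying its data structure), defined as a product of two reflections about the subspaces spanned by the normalized ``star'' states $\ket{p_x} = \sum_y \sqrt{P_{xy}}\,\ket y$ and their transposes. The crucial spectral fact, which I would invoke from Szegedy's spectral theorem, is that the stationary superposition $\ket\pi = \sum_x \sqrt{\pi_x}\,\ket x \ket{p_x}$ is the unique $+1$-eigenvector of $W(P)$, while every other eigenvalue $e^{\pm i\theta}$ obeys $|\theta| \geq \Omega(\sqrt\delta)$. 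This quadratic relationship between the classical eigenvalue gap $\delta$ and the quantum phase gap is exactly what turns the classical $1/\delta$ mixing penalty into the quantum $1/\sqrt\delta$. The setup step prepares $\ket\pi$ at cost $\SSS$.

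Next I would recast the search as amplitude amplification. Writing $\mathcal M$ for the marked set and $\Pi_{\mathcal M}$ for the projector onto states $\ket x\ket{p_x}$ with $x\in\mathcal M$, the hypothesis that the stationary probability of a marked element is at least $\eps$ reads $\norm{\Pi_{\mathcal M}\ket\pi}^2 \geq \eps$, so from $\ket\pi$ the amplitude on the marked subspace is at least $\sqrt\eps$. Amplitude amplification then reaches constant success probability after $O(1/\sqrt\eps)$ rounds, each round being a product of two reflections: the reflection $\mathrm{ref}_{\mathcal M} = I - 2\Pi_{\mathcal M}$, realized by the checking procedure at cost $\CC$, and the reflection $\mathrm{ref}_\pi = 2\ketbra\pi\pi - I$ about the stationary superposition. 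Since $\ket\pi$ is the unique phaseless eigenvector of $W(P)$ and every other phase exceeds $\Omega(\sqrt\delta)$, I would implement $\mathrm{ref}_\pi$ approximately by phase estimation of $W(P)$ to precision $\Theta(\sqrt\delta)$, flipping the sign of a branch exactly when the estimated phase is nonzero; this costs $O(1/\sqrt\delta)$ controlled applications of $W(P)$, i.e. $O(\tfrac 1 {\sqrt\delta}\UU)$. Summing the one-time setup with $O(1/\sqrt\eps)$ amplification rounds yields the claimed cost $\SSS + \tfrac 1 {\sqrt\eps}\bigl(\tfrac 1 {\sqrt\delta}\UU + \CC\bigr)$.

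The main obstacle is that $\mathrm{ref}_\pi$ is realized only \emph{approximately}: phase estimation cannot perfectly separate phase $0$ from the phase gap, and a naive accounting would let these per-round errors accumulate over the $\Omega(1/\sqrt\eps)$ rounds. The technical heart of the proof is therefore a robustness analysis of amplitude amplification under an imperfect reflection about the initial state. I would bound the operator-norm distance between the phase-estimation reflection and the ideal $\mathrm{ref}_\pi$ in terms of the chosen precision, and then show — this is the delicate part, and the reason the clean $1/\sqrt\delta$ scaling survives without extra logarithmic factors — that the total deviation of the amplified state from the ideal one remains bounded by a constant when the precision is fixed at $\Theta(\sqrt\delta)$. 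Everything else (ergodicity and reversibility guaranteeing a well-defined unique $\ket\pi$, and the spectral theorem for $W(P)$) is routine once this error budget is in place.
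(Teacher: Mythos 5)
First, a point of comparison: the paper itself does not prove this statement. Theorem~\ref{thm:MNRS} is imported verbatim from Magniez, Nayak, Roland and Santha~\cite{MNRS} and used as a black box to build the attack of Theorem~\ref{thm:KE4UB}, so there is no internal proof to measure you against. Judged on its own terms, your reconstruction has the correct architecture of the actual MNRS proof: Szegedy's walk operator on the edge space, the quadratic relation between the classical eigenvalue gap $\delta$ and the quantum phase gap $\Omega(\sqrt\delta)$, and amplitude amplification in which the reflection about $\ket\pi$ is simulated by phase estimation on $W(P)$ at cost $O(\UU/\sqrt\delta)$.

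There is, however, a genuine gap at exactly the step you flag as the delicate part, and your proposed resolution does not work as stated. With phase-estimation precision fixed at $\Theta(\sqrt\delta)$ and a single estimation per round, the simulated reflection errs with \emph{constant} probability on eigenvectors whose phase is near the gap, so its operator-norm distance from the ideal $\mathrm{ref}_\pi$ is a constant, not $o(1)$. Over the $\Theta(1/\sqrt\eps)$ amplification rounds these constant errors accumulate linearly and destroy the state after $O(1)$ rounds; the claim that ``the total deviation remains bounded by a constant when the precision is fixed at $\Theta(\sqrt\delta)$'' is therefore false for naive amplitude amplification. The obvious repair --- majority-voting over $O(\log(1/\eps))$ independent phase estimations per round to push the per-round error below $O(\sqrt\eps)$ --- proves the theorem only up to a multiplicative $\log(1/\eps)$ factor on the $\UU$ term. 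What MNRS actually do to get the clean bound is a \emph{recursive} amplitude amplification: the amplification is organized into $t = O(\log(1/\sqrt\eps))$ levels of recursion, and the number of phase-estimation repetitions at level $i$ is tuned to the level (roughly proportional to $t-i$), so that an error injected at level $i$, which is magnified by the $3^{t-i}$ subsequent branchings, stays within a geometrically decreasing budget, while the total cost $\sum_i 3^i\,(t-i)\,\UU/\sqrt\delta = O(3^t\,\UU/\sqrt\delta) = O(\UU/\sqrt{\eps\delta})$ sums as a geometric series with no logarithmic overhead. Without this level-dependent error budget (or an equivalent device), your outline establishes the theorem only in the weaker, log-factor form.
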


Grover's algorithm can be seen as a trivial application of Theorem~\ref{thm:MNRS} (see also~\cite{santha}).
The underlying graph is the complete graph whose nodes are indexed by elements of $U$, with no data structure.
In our case however, the checking procedure is very different because it implies multiple queries to the input, and
we use Theorem~\ref{thm:MNRS} to design our attack.

\subsection*{The generalized adversary method}

We use the generalized adversary method to prove a lower bound on the problem $KE_2$.
The intuition of the method is to consider pairs of inputs leading to different outputs. Each
pair is given a weight\footnote{The original method uses probability distribution, the generalized method
allows for {\it negative weights} as well.}
 quantifying how difficult it is to distinguish them. The key point of the method
is then to measure the progress made by a single quantum query to distinguish paris of inputs.
This intuition can be formalized in several different ways. We use the spectral version,
an elegant algebraic formalization of the previous intuition~\cite{hls07}.

\begin{definition}
\label{def:adv}
Fix a function $F:S\rightarrow T$, with $S \subseteq \{0,1\}^n$.  
A symmetric matrix $\Gamma: S\times S \rightarrow \mathbb{R}$
is an adversary matrix for $F$ provided $\Gamma[x,y]=0$ whenever $F(x)=F(y)$.
Let $\Delta_\ell[x,y] = 1 $ if $x_\ell \neq y_\ell$ and 0 otherwise.
The adversary bound of $F$ using $\Gamma$ is $$\advpm(F;\Gamma) = \min_\ell \frac{\| \Gamma\|}{\|\Gamma \bullet \Delta_\ell\|},$$ where $\bullet$ denotes entrywise (or Hadamard) product, and
$\|A\|$ denotes the spectral norm of $A$.
The adversary bound $\advpm(F)$ is the maximum, over all adversary matrices $\Gamma$ for $F$,
of $\advpm(F;\Gamma). $
\end{definition}

H\o yer, Lee and \v Spalek introduced the generalized adversary method and proved
that the adversary value is a lower bound on quantum query complexity~\cite{hls07}. In our proof,
we need the fact that the adversary bound characterizes quantum query complexity, up to a constant factor.
Our proof starts by considering an adversary matrix for the problem $CF$ such that the adversary bound
is equal to the quantum query complexity.
The fact that the generalized adversary method is tight was proven by
Lee, Mittal, Reichardt, \v Spalek and Szegedy.

\begin{theorem}~\cite{LMRSS10}
\label{thm:advq}
Fix a function $F:S\rightarrow T$.  The bounded-error quantum query complexity of $F$ is characterized by the general adversary bound:
$$Q(f) = \Theta(\advpm(F)).$$
\end{theorem}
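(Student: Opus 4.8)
The plan is to prove Theorem~\ref{thm:advq}, namely that the general adversary bound $\advpm(F)$ characterizes the bounded-error quantum query complexity $Q(F)$ up to constants. The statement has two directions, and I would treat them separately. The easy direction is the lower bound $Q(F) = \Omega(\advpm(F))$, which is precisely the result of H\o yer, Lee and \v Spalek~\cite{hls07}: any adversary matrix $\Gamma$ yields a lower bound on $Q(F)$. The standard argument here is a progress/potential-function argument. Given any $t$-query quantum algorithm computing $F$ with bounded error, one tracks the quantity $W_t = \sum_{x,y} \Gamma[x,y]\,\langle \psi_t^x \mid \psi_t^y\rangle$, where $\ket{\psi_t^x}$ is the state of the algorithm after $t$ queries on input $x$. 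Initially $W_0 = \sum_{x,y}\Gamma[x,y]$ (the states coincide), and at the end the bounded-error requirement forces the overlaps between inputs with distinct outputs to be small, so $\lvert W_T\rvert$ is bounded away from $\norm{\Gamma}$ times a constant. The core estimate is that a single query can change $W_t$ by at most $2\norm{\Gamma \bullet \Delta_\ell}$ for the queried index, hence $T = \Omega(\norm{\Gamma}/\max_\ell \norm{\Gamma \bullet \Delta_\ell})$; maximizing over $\Gamma$ gives the bound. Since this direction is already cited as \cite{hls07}, I would state it as known and spend my effort on the reverse.

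The hard and genuinely new direction is the upper bound $Q(F) = O(\advpm(F))$, showing the adversary bound is \emph{tight}. The strategy, following Lee, Mittal, Reichardt, \v Spalek and Szegedy~\cite{LMRSS10}, is to pass through the dual formulation of the adversary bound as a semidefinite program. First I would write $\advpm(F)$ as an SDP and take its dual; the dual solution is a collection of vectors $\{\ket{v_{x,\ell}}\}$ satisfying a feasibility condition of the form $\sum_{\ell:\,x_\ell\neq y_\ell}\langle v_{x,\ell}\mid v_{y,\ell}\rangle = 1$ for all pairs $x,y$ with $F(x)\neq F(y)$, with objective $\max_x \sum_\ell \norm{v_{x,\ell}}^2$ equal to $\advpm(F)$. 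These data are exactly what is needed to define a \emph{span program} (or, equivalently in this setting, a feasible solution to the SDP for a quantum query algorithm), and the central technical result is that from any such feasible dual solution of value $W$ one can construct a bounded-error quantum algorithm making $O(W)$ queries.

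The construction I would carry out builds a quantum walk on a bipartite graph whose edges encode the span-program vectors, and runs phase estimation of a reflection operator (the product of two reflections, one determined by the input $x$ through the query oracle and one determined by the span-program structure). The key steps are: (i) assemble the span program from the dual vectors; (ii) define the two reflections and show that on a ``no'' input the relevant initial state is close to a $+1$-eigenvector of their product while on a ``yes'' input it has negligible overlap with the $+1$-eigenspace, with a spectral gap controlled by $1/W$; (iii) invoke the effective-spectral-gap lemma to show phase estimation to precision $O(1/W)$ distinguishes the two cases; and (iv) account for the $O(W)$ queries needed to implement one reflection-and-phase-estimation round. The main obstacle is step (ii)--(iii): proving the spectral-gap bound that links the geometric quantity $W = \advpm(F)$ to the separation detectable by phase estimation, since this requires the effective spectral gap lemma and a careful analysis of the overlap of the initial state with small-eigenvalue subspaces rather than a literal eigenvalue gap. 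A secondary subtlety is handling non-boolean alphabets (the model in the appendix takes $S\subseteq\{0,1\}^n$), which is absorbed by the ``equivalent up to logarithmic factors'' remark but must be made consistent with the $\Delta_\ell$ definition. Once both directions are in hand, combining $Q(F)=\Omega(\advpm(F))$ and $Q(F)=O(\advpm(F))$ yields $Q(F)=\Theta(\advpm(F))$, as claimed.
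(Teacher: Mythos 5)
The first thing to note is that the paper does not prove this theorem at all: it is stated in the appendix as imported background, with the lower-bound direction attributed to \cite{hls07} and the tightness direction to \cite{LMRSS10}, so there is no internal proof to compare your attempt against. Judged against the actual literature proof, your outline is a faithful reconstruction of the right route: progress-function argument for $Q(F)=\Omega(\advpm(F))$, then SDP duality, span programs, and phase estimation of a product of two reflections (with the effective spectral gap lemma) for $Q(F)=O(\advpm(F))$. Three technical points deserve correction before this could be called a proof. First, the progress function must be weighted by a principal eigenvector $\delta$ of $\Gamma$, i.e.\ $W_t=\sum_{x,y}\Gamma[x,y]\,\delta_x\delta_y\braket{\psi_t^x}{\psi_t^y}$, so that $W_0=\norm{\Gamma}$; your unweighted $W_0=\sum_{x,y}\Gamma[x,y]$ has no useful relation to $\norm{\Gamma}$, and the per-query bound $|W_{t+1}-W_t|\leq 2\max_\ell\norm{\Gamma\bullet\Delta_\ell}$ is established for the weighted version. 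Moreover, with negative weights the endgame is genuinely subtler than ``overlaps between differently-labeled inputs are small'': the classical final-condition argument breaks, and \cite{hls07} need a separate output-condition lemma to bound $|W_T|$; this is precisely what makes $\advpm$ harder to prove as a lower bound than the nonnegative adversary, and your sketch glosses over it. Second, the dual SDP is a \emph{minimization} --- minimize $\max_x\sum_\ell\norm{v_{x,\ell}}^2$ subject to $\sum_{\ell:\,x_\ell\neq y_\ell}\braket{v_{x,\ell}}{v_{y,\ell}}=1$ whenever $F(x)\neq F(y)$ --- whose optimum equals $\advpm(F)$ by strong duality (which itself requires checking a Slater-type condition); writing the objective as a maximum inverts the logic of extracting an algorithm of complexity $O(\advpm(F))$ from an \emph{optimal} dual solution. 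Third, the query accounting in your step (iv) is slightly off: a single run of phase estimation to precision $\Theta(1/W)$ uses $O(W)$ controlled applications of the walk operator, each of which costs $O(1)$ queries, rather than ``$O(W)$ queries per reflection-and-phase-estimation round.'' With those repairs your plan matches the \cite{LMRSS10} proof that the paper relies on; for the paper's purposes, though, the theorem is correctly treated as a black box, since only its statement (tightness of the adversary bound for $CF$) is used in the proof of Lemma~\ref{lm:WCLB}.
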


\end{document}